\documentclass[preprint]{revtex4}
\usepackage{enumerate}
\usepackage{graphicx}
\usepackage{dcolumn}
\usepackage{bm}
\usepackage{amsmath}
\usepackage{amsxtra}
\usepackage{amstext}
\usepackage{amssymb}
\usepackage{latexsym}

\usepackage{amscd,verbatim}
\usepackage[all]{xy}

\providecommand{\U}[1]{\protect\rule{.1in}{.1in}}
\newtheorem{theorem}{Theorem}

\newtheorem{proposition}[theorem]{Proposition}
\newtheorem{remark}[theorem]{Remark}

\newenvironment{proof}[1][Proof]{\noindent\textbf{#1.}}{\ \rule{0.5em}{0.5em}}
\begin{document}

\title{
{Symmetry Group and Group Representations Associated to the Thermodynamic Covariance Principle (TCP)}
}

\author{Giorgio Sonnino$^{1,}{}^{2\star}$, Jarah Evslin$^3$, \\ Alberto Sonnino$^{4},$ Gy{$\ddot{\rm o}$}rgy Steinbrecher${}^{5}$, Enrique Tirapegui${}^{6}$}
\affiliation{
${}^1$ Department of Theoretical Physics and Mathematics, Universit{\'e} Libre de Bruxelles (U.L.B.), 
Campus Plaine C.P. 231, Bvd du Triomphe, 1050  Brussels - Belgium.\\ 
${}^2$ Royal Military School (RMS), Av. de la Renaissance 30, 1000 Brussels - Belgium.\\
${}^3$High Energy Nuclear Physics Group, Institute of Modern Physics,
Chinese Academy of Sciences, Lanzhou - China.\\
${}^4$ Department of Computer Science, University College London (UCL), \\ Gower St, WC1E 6BT London - United Kingdom.\\
${}^5$Physics Department - University of Craiova Str. A. I. Cuza 13 200585 Craiova - Romania.\\
${}^6$Departamento de F{\'i}sica, Facultad de Ciencias F{\'i}sicas y Mathem{\'a}ticas, \\Universidad de Chile, Casilla 487-3 Santiago de Chile - Chile.}

\begin{abstract}
We describe the Lie group and the group representations associated to the nonlinear Thermodynamic Coordinate Transformations (TCT). The TCT guarantee the validity of the Thermodynamic Covariance Principle (TCP) : {\it The nonlinear closure equations, i.e. the flux-force relations, everywhere and in particular outside the Onsager region, must be covariant under TCT}. In other terms, the fundamental laws of thermodynamics should be manifestly covariant under transformations between the admissible thermodynamic forces, i.e. under TCT. The TCP ensures the validity of the fundamental theorems for systems far from equilibrium. The symmetry properties of a physical system are intimately related to the conservation laws characterizing that system. Noether's theorem gives a precise description of this relation. We derive the conserved (thermodynamic) currents and, as an example of calculation, a simple system out of equilibrium where the validity of TCP is imposed at the level of the kinetic equations is also analyzed.

\vskip 0.5truecm
\noindent PACS numbers:  05.70.Ln, 05.20.Dd, 05.60.-k

\noindent ${}^{\star}$ Email: gsonnino@ulb.ac.be
\end{abstract}

\maketitle

\section{Introduction - The Thermodynamical Field Theory (TFT) at a glance and scope of the present work}


In a previous work, one of us introduced a macroscopic theory for closure relations for systems out of Onsager${}^{\prime}$s region \cite{sonnino}-\cite{sonnino2}. The most important closure relations are the so-called transport equations, relating the dissipative fluxes to the thermodynamic forces that produce them. The latter is related to the spatial inhomogeneity and is expressed as gradients of the thermodynamic quantities. The study of these relations is the object of non-equilibrium thermodynamics. Indicating with $X^\mu$ and $J_\mu$ the thermodynamic forces and fluxes, respectively, the flux-force relations read
\begin{equation}\label{IR1}
J_{\nu}=\tau_{\mu\nu}(X)X^\nu
\end{equation}
\noindent where $\tau_{\mu\nu}(X)$ are the transport coefficients, and it is clearly put in evidence that the transport coefficients may depend on the thermodynamic forces. We suppose that all quantities involved in Eq.~(\ref{IR1}) are written in dimensionless form. In this equation, as in the remainder of this paper, the Einstein summation convention on the repeated indexes is adopted. Matrix $\tau_{\mu\nu}(X)$ can be decomposed into a sum of two matrices, one symmetric and the other skew-symmetric, which we denote with $g_{\mu\nu} (X)$ and $f_{\mu\nu}(X)$, respectively. The second law of thermodynamics requires that $g_{\mu\nu}$ be a positive-definite matrix. Note that, in general, the {\it entropy production}, which we denote by $\sigma$ with $\sigma=\tau_{\mu\nu}(X)X^\mu X^\nu=g_{\mu\nu}(X)X^\mu X^\nu$, may not be a bilinear expression of the thermodynamic forces (since the transport coefficients may depend on the thermodynamic forces). For conciseness, in the sequel we drop the symbol $X$ in $g_{\mu\nu}$ as well as in the skew-symmetric piece of the transport coefficients, $f_{\mu\nu}$, being implicitly understood that these matrices may depend on the thermodynamic forces. 

\noindent The aim of the theory in Ref.~\cite{sonnino} is to determine the nonlinear flux-force relations such that

\noindent $\bf A)$ respect the thermodynamic theorems for systems far from equilibrium; 

\noindent $\bf B)$ are covariant under the Thermodynamic Coordinate Transformations (TCT) i.e., the closure relations should be covariant under the transformations of the thermodynamic forces leaving unaltered both the entropy production, $\sigma$, and the Glansdorff-Prigogine dissipative quantity, $P$ [for the definition of $P$, see the forthcoming Eq.~(\ref{I1})]. As we shall see below, this covariance property is intimately related the concept of systems equivalent from the thermodynamic point of view. 

\noindent In addition to A) and B), the theory rests upon the following assumption:

\noindent $\bf C)$ Close to the steady-states, there exists a thermodynamic action, scalar under thermodynamic coordinate transformations, which is stationary for general variations in the transport coefficients and the affine connection. 

\noindent This theory, based on  A), B) and C), is referred to as {\it Thermodynamical Field Theory} (TFT). Let us now discuss the physical meaning of A), B) and C). 

$\bullet$ {\bf Constraint A)}. 

\noindent Among the thermodynamic theorems, which should be satisfied by systems out of equilibrium, we cite the well-known {\it Universal Criterion of Evolution} (UCE) \cite{prigogine1}-\cite{prigogine2}. In particular, when the system is close to the thermodynamic equilibrium the theorems valid in the Onsager regime (such as, for example, the {\it Minimum Entropy Production Theorem} (MEPT) \cite{prigogineR1}-\cite{prigogineR2}) should be recovered by the non-linear theory. Constraint A) allows introducing the so called {\it Space of the Thermodynamic Forces} (or, simply, the {\it Thermodynamic Space}) in the following manner (see Fig.~(\ref{Thermodynamic_Space}) and Ref.~\cite{sonnino}). The coordinates of the Thermodynamic Space are the thermodynamic forces, the metric is identified with the (positive-definite) piece of the transport coefficients and the parallel transport of a vector is made by the affine connection constructed in such a way that the Universal Criterion of Evolution (UCE) is automatically satisfied. We recall that the UCE is valid for systems out of equilibrium and subject to time-independent boundary conditions \cite{sonnino}. 
\begin{figure*}[htb] 
\hspace{0cm}\includegraphics[width=7.5cm,height=7.5cm]{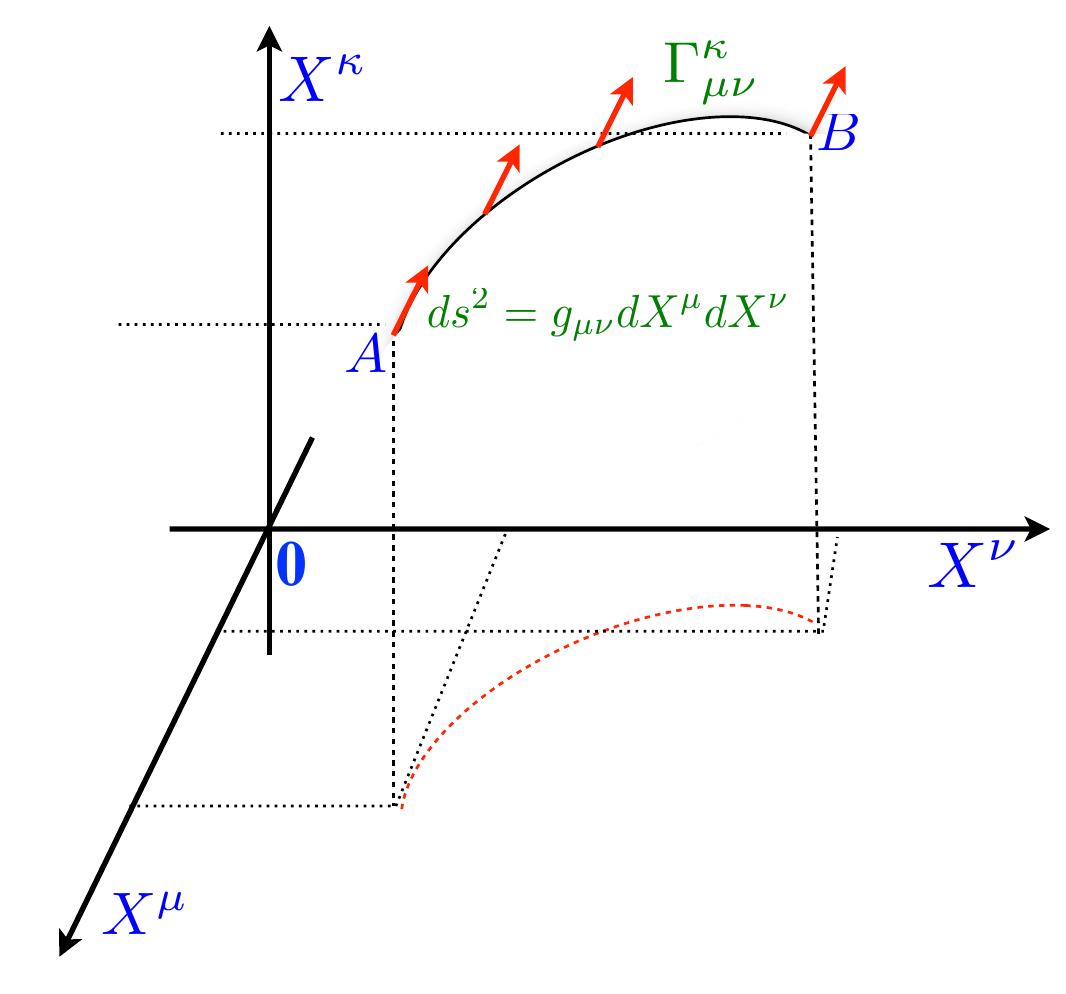}
\caption{ \label{Thermodynamic_Space} The thermodynamic space. The space is spanned by the thermodynamic forces. The metric tensor is identified with the symmetric piece, $g_{\mu\nu}$, of the transport coefficients, and the expression of the affine connection, $\Gamma^\kappa_{\mu\nu}$, is determined by imposing the validity of the Universal Criterion of Evolution. Note that the square of the length element, $ds^2={\mathbf ds}\cdot {\mathbf ds}$, is always a non-negative quantity for the second law of thermodynamics.}
\end{figure*}

$\bullet$ {\bf Constraint B)}.

\noindent The main objective of this work is to analyze the symmetry underneath the Thermodynamic Covariance Principle. To this end, let us come back to the concept of equivalent systems from the thermodynamic point of view. This concept was originally introduced by Th. De Donder and I. Prigogine, and it has been recently deeply investigated and revised in Refs~\cite{sonnino}, \cite{sonnino1}. This statement stems from the Einstein formula linking the {\it probability of a fluctuation}, $\mathcal W$, with the {\it entropy production strength}, $\Delta_I S$, associated with the fluctuations from the non-equilibrium steady state. Denoting by $\xi_i$ ($i=1\cdots m$) the $m$ deviations of the thermodynamic quantities from their equilibrium value, Prigogine proposed that the probability distribution of finding a state in which the values $\xi_i$ lie between $\xi_i$ and $\xi_i+d\xi_i$ is given by \cite{prigogineR2}
\begin{equation}\label{i3a}
\mathcal{W}=W_0\exp[\Delta_{\rm I}  S/k_B]\qquad\quad
{\rm where}\qquad \Delta_{\rm I}   S=\int_E^F d_{\rm I} s\quad  {\rm ;}\quad \frac{d_{\rm I}  s}{dt}\equiv\int_\Omega\sigma dv
\end{equation}
\noindent Here, $k_B$ is the Bolzmann constant and $W_0$ is a normalization constant that ensures the sum of all probabilities equals one. In addition, $dv$ is a (spatial) volume element of the system, and the integration is over the entire space $\Omega$ occupied by the system in question. $E$ and $F$ indicate the equilibrium state and the state to which a fluctuation has driven the system, respectively. We note that the probability distribution (\ref{i3a}) remains unaltered for flux-force transformations leaving invariant the entropy production. On the basis of the above observations, and other concrete examples analyzed in \cite{prigogineR1}-\cite{prigogineR2}, Th. De Donder and I. Prigogine formulated, for the first time, the concept of {\it equivalent systems from the thermodynamical point of view}. For Th. De Donder and I. Prigogine, {\it thermodynamic systems are thermodynamically equivalent if, under transformation of fluxes and forces, the bilinear form of the entropy production remains unaltered, i.e.,} $\sigma=\sigma'$. \cite{prigogineR2}. Hence, in classical textbooks on linear and nonlinear
irreversible thermodynamics the concepts of {\it equivalence of thermodynamic systems} is formulated only in terms of invariance of the entropy production under the thermodynamic force transformations (see, for example, \cite{degroot}, \cite{ottinger}). 

\noindent However, the condition of the invariance of the entropy production is not sufficient to ensure the equivalence character of the  two descriptions $(J_\mu , X^\mu)$ and $(J'_\mu , X'^\mu )$. Indeed, we can convince ourselves that there exists a large class of transformations such that, even though they leave unaltered the expression of the entropy production, they may lead to certain paradoxes to which Verschaffelt has called attention \cite{verschaffelt}-\cite{davies}. In addition, the above De Donder-Prigogine definition is unable to determine, univocally, the most general class of the thermodynamic force transformations able to ensure the equivalence among thermodynamic systems (see, \cite{degroot}). These obstacles may be overcome if one takes into account one of the most fundamental and general theorems valid in thermodynamics of irreversible processes : the {\it Universal Criterion of Evolution} (UCE) \cite{prigogine1}, \cite{prigogine2}. Without using neither the Onsager reciprocal relations nor the assumption that the phenomenological coefficients (or linear phenomenological laws) are constant, for time-independent boundary conditions, when the system relaxes towards a {\it stable steady state}, the dissipative quantity $P$, defined as \cite{prigogine1}-\cite{prigogine2}
\begin{equation}\label{I1}
P\equiv\int_\Omega J_\mu\frac{d X}{d t}^\mu\ dV\leq 0
\end{equation}
\noindent is always a negative quantity, being the negative sign due to the {\it stability of the system}. In Eq.~(\ref{I1}), $\Omega$ is the volume occupied by the system and $dV$ the volume-element, respectively. In addition
\begin{equation}\label{IR1a}
 \int_\Omega J_\mu\frac{dX}{dt}^\mu dV= 0 \quad {\rm at\ the\ steady\ state} 
 \end{equation}
\noindent hence, quantity $P\equiv\int_\Omega J_\mu\frac{dX^\mu}{dt} dV$ is a sort of intrinsic quantity of a dissipative system, and it may be referred to as {\it the Glansdorff-Prigogine dissipative quantity}. In Refs~\cite{sonnino}, \cite{sonnino1}, it is shown that to ensure the equivalent character of the two descriptions, $\{X^\mu\}$ and $\{X^{\prime\mu}\}$, it is not sufficient to require that the entropy production of the system, $\sigma=g_{\mu\nu}X^\mu X^\nu$, is invariant under the flux-force transformation, but we should also require that the Glansdorff-Prigogine dissipative quantity remains invariant under transformation of the thermodynamic forces $\{X^\mu\}\rightarrow\{X^{\prime\mu}\}$. Indeed, in general, we are free to make our choice of the set of the thermodynamic forces. For example, if we analyze the case of heat conduction in non-expanding solid, where chemical reactions take place simultaneously, we can choose as thermodynamic forces the (dimensionless) chemical affinities over Temperature and the (dimensionless) gradient of the inverse of the Temperature. It is quite possible to modify the definition of the thermodynamic forces by taking a linear combination of the chemical affinities over temperature and the gradient of the inverse of Temperature \cite{jqc}. Of course, these two representations of the thermodynamic forces are equivalent only if the transformation between these two set of thermodynamic forces leaves unaltered the expression of the entropy production {\it and} it preserves the negative sign of the quantity $P$ (otherwise the UCE will be violated due to the mathematical transformation). 

\noindent Magnetically confined Tokamak-plasmas are a typical example of thermodynamic systems, out of Onsager's region, where the equivalence between two different choices of the thermodynamic forces is warranted only if both the entropy production and the Glansdorff-Prigogine dissipative quantity $P$, defined above, remain unaltered under transformation of these thermodynamic forces \cite{sonnino3}.  

\noindent By summarizing, the (admissible) thermodynamic forces should satisfy the following two conditions:

\noindent {\bf 1)} {\it The entropy production, $\sigma$, should be invariant under transformation of the thermodynamic forces $\{X^\mu\}\rightarrow\{X^{\prime\mu}\}$} and

\noindent {\bf 2)} {\it The Glansdorff-Prigogine dissipative quantity, $P$, should also remain invariant under the forces transformations $\{X^\mu\}\rightarrow\{X^{\prime\mu}\}$}.

\noindent Condition {\bf 2)} stems from the fact that

\noindent {\bf 2a)} {\it The steady state should be transformed into a steady state;} 

\noindent and

\noindent {\bf 2b)} {\it The stable steady state should be transformed into a stable state state, with the same {\it degree} of stability.}

\noindent This kind of transformations may be referred to as the {\it Thermodynamic Coordinate Transformations} (TCT). The forthcoming Eqs~(\ref{topology1}), in Sec.~\ref{topology}, provide with the most general class of TCT satisfying conditions {\bf 1)} and {\bf 2)}. As we shall prove in Section~\ref{topology}, the TCT are not, simply, transformations written in a projective form, but they form a nontrivial bundle whose base is the projective space and whose fiber is the space of maps from the projective space to the non-vanishing reals.

\noindent The thermodynamic equivalence principle leads, naturally, to the following {\it Thermodynamic Covariance Principle} (TCP) : {\it The nonlinear closure equations, {\it i.e.} the flux-force relations, must be covariant under TCT} \cite{sonnino1}. The essence of the TCP is the following. The equivalent character between two representations is warranted if, and only if, the fundamental thermodynamic equations are covariant under TCT. Loosely speaking, the covariant formalism warrants that the fundamental laws of thermodynamics, for instance the flux-force closure equations, remain invariant (or better, covariant) under transformations of the (admissible) thermodynamic forces. This is the correct mathematical formalism to ensure the equivalence between two different representations. Note that the TCP is trivially satisfied by the closure equations valid in the Onsager region. We mention that the linear version of the TCT is actually widely used for studying transport processes in Tokamak-plasmas (see, for examples, the papers cited in the book \cite{balescu2}).

$\bullet$ {\bf Assumption C)}.

\noindent According to assumption C), there exists an action which is stationary with respect to arbitrary variation of the transport coefficient and the affine connection. To avoid misunderstanding, while it is correct to mention that this postulate affirms the possibility of deriving the nonlinear closure equations by a variational principle, it does not state that the expressions and theorems obtained from the solutions of these equations can also be derived by a variational principle. In particular the Universal Criterion of Evolution {\it cannot} be derived by a variational principle. 

\noindent In the framework of the Thermodynamical Field Theory (TFT) introduced by one of us \cite{sonnino}, \cite{sonnino2} one can find the expression of the thermodynamic action \cite{sonnino}:
\begin{equation}\label{pa5}
I=\int\Bigl[ R-(\Gamma^\lambda_{\alpha\beta}-
{\tilde\Gamma}^\lambda_{\alpha\beta})S^{\alpha\beta}_{\lambda}
\Bigr]\sqrt{g}\ \! {d^{}}^n\!X
\end{equation}
\noindent with ${d^{}}^n\!X$  denoting an infinitesimal volume element in the space of the thermodynamic forces and $g$ the determinant of $g_{\mu\nu}$ (see \cite{sonnino}). In addition, $R$ is the curvature of the space. 

\noindent Action (\ref{pa5}) is derived by imposing that \cite{sonnino}
\vskip0.3truecm
\noindent {\bf i)} it is constructed with the two pieces of the transport coefficients, $g_{\mu\nu}$ and $f_{\mu\nu}$, the affine connection $\Gamma^\mu_{\alpha\beta}$ and {\it only} with the first-order derivatives of these fields;

\noindent {\bf ii)} it is invariant under TCT. This constraint ensures the validity of the TCP (i.e., the closure equations should be covarianat under TCT);

\noindent {\bf iii)} it is stationary when $\Gamma^\mu_{\alpha\beta}=\tilde{\Gamma}^\mu_{\alpha\beta}$ i.e., the action is stationary only when the affine connection coincides with the expression able to satisfy (automatically) the UCE; 

\noindent {\bf iv)} metric $g_{\mu\nu}$ and the skew-symmetric piece of the transport coefficients $f_{\mu\nu}$ tend to the Onsager matrices as the thermodynamic system approaches equilibrium.
\vskip0.3truecm
\noindent As shown in Ref.~\cite{sonnino}, in general, action (\ref{pa5}) is a quite complex. However, in case of magnetically confined plasmas (which is the case analyzed in this work) the skew symmetric pieces of the transport coefficients $f_{\mu\nu}$ are zero, and the action simplifies notably because the terms appearing in Eq.~(\ref{pa5}) reduce to \cite{sonnino}
\begin{eqnarray}\label{pa7}
&&R=R_{\mu\nu}g^{\mu\nu}\\
&&R_{\mu\nu}=\Gamma^\kappa_{\nu\kappa ,\mu}-\Gamma^\kappa_{\nu\mu ,\kappa}\!+\!\Gamma^\kappa_{\nu\lambda}\Gamma^\lambda_{\kappa\mu}\!-\!\Gamma^\kappa_{\nu\mu}\Gamma^\lambda_{\kappa\lambda}\nonumber\\
&&S_\lambda^{\mu\nu}=\Psi_{\lambda\alpha}^\nu g^{\nu\alpha}\!+\!\Psi_{\lambda\alpha}^\mu g^{\mu\alpha}\!-\!\frac{1}{2}\Psi_{\alpha\beta}^\mu g^{\alpha\beta}\delta^\nu_\lambda\!-\!\frac{1}{2}\Psi^\nu_{\alpha\beta}g^{\alpha\beta}\delta^\mu_\lambda\nonumber\\
&&\Psi^\mu_{\alpha\beta}=\frac{1}{2\sigma}X^\kappa X^\mu g_{\alpha\beta,\kappa}\!-\!\frac{X^\kappa X^\lambda}{2(n+1)\sigma}\left(\delta^\mu_\alpha g_{\beta\kappa , \lambda}\!+\!\delta^\mu_\beta g_{\alpha\kappa , \lambda}\right)\nonumber\\
&&{\tilde\Gamma}^\mu_{\alpha\beta}=
\begin{Bmatrix} 
\mu \\ \alpha\beta \end{Bmatrix}
\!+\!\frac{1}{2\sigma}X^\mu X^\kappa g_{\alpha\beta ,\kappa}\!-\!\frac{X^\kappa X^\lambda}{2(n+1)\sigma}\left(\delta^\mu_\alpha g_{\beta\kappa ,\lambda}\!+\!\delta^\mu_\beta g_{\alpha\kappa ,\lambda}\right)\nonumber\\
&&\begin{Bmatrix} 
\mu \\ \alpha\beta \end{Bmatrix}
=\frac{1}{2}g^{\mu\lambda}\left(g_{\lambda\alpha ,\beta}\!+\!g_{\lambda\beta,\alpha}\!-\!g_{\alpha\beta,\lambda}\right)\nonumber\\
&&\Delta\Gamma^\mu_{\alpha\beta}\equiv\Gamma^\mu_{\alpha\beta}-\begin{Bmatrix} 
\mu \\ \alpha\beta \end{Bmatrix}=\!\frac{1}{2\sigma}X^\mu X^\kappa g_{\alpha\beta ,\kappa}\!-\!\frac{X^\kappa X^\lambda}{2(n+1)\sigma}\left(\delta^\mu_\alpha g_{\beta\kappa ,\lambda}\!+\!\delta^\mu_\beta g_{\alpha\kappa ,\lambda}\right)
\end{eqnarray}
\noindent with $g^{\mu\nu}$ denoting the inverse matrix of $g_{\mu\nu}$ and $","$ stands for the partial derivative with respect to the thermodynamic forces. The most general expressions for $S_\lambda^{\mu\nu}$, $\Psi^\mu_{\alpha\beta}$ and ${\tilde\Gamma}^\mu_{\alpha\beta}$ (and, then, for $\Delta\Gamma^\mu_{\alpha\beta}$), valid when $f_{\mu\nu}\neq 0$, can be found in Ref.~\cite{sonnino}. It can be shown that action (\ref{pa5}) is stationary when the {\it thermodynamic affine connection} $\Gamma^\mu_{\alpha\beta}$ is equal to ${\tilde\Gamma}^\mu_{\alpha\beta}$, with ${\tilde\Gamma}^\mu_{\alpha\beta}$ having the expression given by the r.h.s. of the fifth equation of Eqs~(\ref{pa7}) (only when $f_{\mu\nu}=0$). The sixth equation of Eqs~(\ref{pa7}) corresponds to the Levi-Civita affine connection valid in General Relativity. As shown by the seventh equation of Eqs~(\ref{pa7}), the thermodynamic affine connection differs widely from the Levi-Civita connection \footnote{Note that $\Delta\Gamma^\mu_{\alpha\beta}$ is a mixed third-order tensor under TCT.}. These two affine connections tend to identify each other only for very large values of the entropy production ($\sigma\gg 1$) \footnote{As shown in Ref.~\cite{sonnino}, the geometries of the General Relativity (GR) and the Thermodynamical field Theory (TFT) are widely different. Indeed, in the GR the geometry is pseudo-Riemannian, the field is symmetric and the affine connection is given by the Levi-Civita expression. In addition, the GR rests upon the validity of the {\it General Covariance Principle} in the space-time and on the validity of the {\it Equivalence Principle}. In the GR, the Universal Criterion of Evolution is not satisfied. In the TFT, the geometry is non-Riemannian, the field is asymmetric and the thermodynamic affine connection is given by ${\Gamma^\mu_{\alpha\beta}=\tilde\Gamma}^\mu_{\alpha\beta}$. The TFT rests upon the validity of the (special) covariance principle TCP and on the validity of the Universal Criterion of Evolution. In the TFT, the Equivalence Principle is not satisfied. For more details, see the annex of \cite{sonnino}.}. Note that the first piece of Lagrangian, i.e., $R\sqrt{g}$, is due to geometry and it is a generic contribution which appears whenever the curvature of the space is constructed through the Riemann tensor (e.g., space-time, thermodynamic space, etc). Indeed, the vanishing divergence of the tensor derived by the term $R\sqrt{g}$ reflects the geometric unchangeable property which comes from the theorem that the {\it boundary of a boundary is zero}. As known, this theorem gives the geometrical interpretation of the algebraic Bianchi identity \footnote{If we consider an infinitesimal cubical coordinate volume in the space of the thermodynamic forces, when a generic vector ${\mathbf A}^\mu$ is parallel transported around all the six surfaces of the cube, all the edges are traversed twice, once in each direction. These displacements have signs depending upon the direction in which an edge is traversed, so all the displacements add up to zero \cite{wheeler}.}. On the other hand, the tensor derived by the term $(\Gamma^\lambda_{\alpha\beta}-
{\tilde\Gamma}^\lambda_{\alpha\beta})S^{\alpha\beta}_{\lambda}\!\sqrt{g}$ reflects physics. The physical meaning of the tensor, constructed by this piece of Lagrangian, rests upon the Noether current. In our case (i.e., the TFT), this theorem reflects the required symmetry expressing the invariance of the Lagrangian under TCT. The tensor derived by the Noether current is the {\it source} of the thermodynamic space and, as shown in \cite{sonnino}, it vanishes in the Onsager region \footnote{Indeed, it is possible to show that the {\it source term} of the thermodynamic space is the second order thermodynamic-tensor $T_{\mu\nu}=-S_\lambda^{\alpha\beta}\frac{\delta{\tilde{\Gamma}}^\lambda_{\alpha\beta}}{\delta g^{\mu\nu}}$ \cite{sonnino}, with $\frac{\delta{\tilde{\Gamma}}^\lambda_{\alpha\beta}}{\delta g^{\mu\nu}}$ denoting the variation of $\tilde{\Gamma}^\lambda_{\alpha\beta}$ with respect to $g^{\mu\nu}$.}.

\noindent By imposing the stationary of action (\ref{pa5}) with respect to small variations of the transport coefficients, we get the non-linear transport equations \cite{sonnino}. These equations tend to the Onsager transport equations when the system approaches equilibrium. We mention that it is possible to prove that, in the {\it weak field approximation}, i.e. when $g_{\mu\nu}(X)\simeq L_{\mu\nu}+h_{\mu\nu}(X)$, with $L_{\mu\nu}$ and $h_{\mu\nu}(X)$ denoting the Onsager transport coefficients matrix and the (weak) perturbation of the Onsager matrix respectively, and {\it for very large values of the entropy production} ($\sigma\gg 1$), we have \cite{sonnino}
\begin{equation}\label{pa8}
\Gamma^\kappa_{\mu\nu}={\tilde \Gamma}^\kappa_{\mu\nu}=\frac{1}{2}L^{\kappa\eta}(h_{\mu\nu ,\nu}+h_{\nu\eta ,\mu}-h_{\mu\nu , \eta})+h.o.t.
\end{equation}
\noindent where $h.o.t.$ stands for {\it higher order terms}. Clearly, a transport theory without knowledge of microscopic dynamical laws cannot be developed. Transport theory is only but an aspect of non-equilibrium statistical mechanics, which provides the link between micro-level and macro-level. This link appears indirectly in the {\it unperturbed} matrices, i.e., the $L^{\mu\nu}$ (and the $f_0^{\mu\nu}$) coefficients used as an input in the equations. These coefficients, which depend on the specific material under consideration, have to be calculated in the usual way by kinetic theory. The perturbation fields (i.e., the corrections to the Onsager transport coefficients) $h_{\mu\nu}(X)$ depend on the thermodynamic forces and for $\sigma\gg 1$ they are solutions of the equations \cite{sonnino}
\begin{equation}\label{pa9}
L^{\lambda\kappa}\frac{\partial^2 h_{\mu\nu}}{\partial X^\lambda X^\kappa}+L^{\lambda\kappa}\frac{\partial^2 h_{\lambda\kappa}}{\partial X^\mu X^\nu}-L^{\lambda\kappa}\frac{\partial^2 h_{\lambda\nu}}{\partial X^\kappa X^\mu}-L^{\lambda\kappa}\frac{\partial^2 h_{\lambda\mu}}{\partial X^\kappa X^\nu}=0+h.o.t.
\end{equation}
\noindent For $\sigma\sim 1$, Eqs~(\ref{pa8}) and (\ref{pa9}) loose of validity and the correct equations become much more complex. Note that for $\sigma\ll 1$ we enter into the Onsager regime. Eqs~(\ref{pa9}) should be solved with the appropriate boundary conditions. Concrete examples can be found in Refs~\cite{sonnino2}, \cite{jqc}, where the nonlinear thermoelectric effect and chemical reactions out of Onsager region are analyzed in detail. In these cases, the boundary conditions are obtained by imposing that, for very large values of the gradient of the inverse of the Temperature and of the applied electric field, the electrical and heat fluxes and the chemical flows have no privileged directions in the thermodynamic space \cite{sonnino2}, \cite{jqc}. As mentioned above, the Onsager matrix $L^{\mu\nu}$ is derived by kinetic theory and introduced, as an input, into Eqs~(\ref{pa9}). It is worth mentioning that for the case of chemical reactions the solution of Eqs~(\ref{pa9}), subject to the appropriate boundary conditions, coincides, exactly, with {\it De Donder's law of mass} \cite{degroot}, \cite{jqc}. 

\noindent The present work is organized as follows. In Section~\ref{topology}, we provide the topological description of the group of TCT. In Section~\ref{algebra} we show that the TCT-group may be split as a semidirect product of two subgroups where the first one is a normal (abelian) subgroup of the TCT-group. The mathematical details related to this demonstration can be found in the Appendix. In Section~\ref{noether} we obtain the expression of Noether's current for the case of Tokamak-plasmas in fully collisional transport regime. In Section~\ref {kinetic} we derive the collisional operator, which guarantees that the Thermodynamic Covariance Principle (TCP) is satisfied by the closure transport relations (i.e., the flux-force relations). We conclude by showing a comparison between an experimental profile of radial heat loss for FTU (Frascati Tokamak Upgrade)-Plasmas against the theoretical predictions obtained by the linear (Onsager) and the non-linear (TCP) theories. Concluding remarks are reported in Section \ref{conclusions}.

\section{Topological Description of the TCT-Group}\label{topology}

\subsection{The construction}

As shown in Ref.~\cite{sonnino}, the TCT are given by
\begin{equation}\label{topology1}
X^\mu\rightarrow X^{\prime\mu}=X^1 F^\mu\left(\frac{X^2}{X^1},\frac{X^3}{X^2},...,\frac{X^n}{X^{n-1}}\right) .
\end{equation}
where $F^\mu$ are {\it arbitrary functions} of variables $X^j/X^{j-1}$ with ($j=2,\dots, n$). We demand that the $n$ functions $F^\mu$ be smooth, so that the TCT preserve equations satisfied by the derivatives of thermodynamic quantities, and also that the transformation be nondegenerate with a smooth inverse, so that the transformed theory contain all of the information of the original theory.  The nondegenerate property is also a necessary and sufficient condition for the finiteness of the transformed transport coefficients, even though it implies that the $F^\mu$ themselves may sometimes diverge. For example, from the transformation
\[
X^{\prime 1}=X^2,\ \ X^{\prime 2}=X^1
\]
one obtains
\[
F^1(X^2/X^1)=X^2/X^1,\ \ F^2(X^2/X^1)=1
\]
showing that $F^1(X^2/X^1)$ diverges at $X^1=0$, whereas the $X^{\prime\mu}$ are always finite. 

\noindent The space of thermodynamic forces, linear combinations of $\{X^1,...,X^n\}$, is the real Euclidean space $\mathbb{R}^n$.  On the other hand the ratios $\{X^\mu/X^{\mu-1}\}$ are coordinates for a different space, the real projective space $\mathbb{RP}^{n-1}$, which is defined to be the quotient of  $\mathbb{R}^n$ minus the origin by the scaling map $X^\mu\rightarrow \alpha X^\mu$ where $\alpha$ is any nonzero real number.  Note that some of the $X^\mu$ may vanish, removing the origin simply implies that not all of the $X^\mu$ vanish simultaneously. Fig.~(\ref{Proj_Space3}) illustrates a space, which is diffeomorphic to the $\mathbb{RP}^{n-1}$.
\begin{figure*}[htb] 
\hspace{0cm}\includegraphics[width=7.7cm,height=8cm]{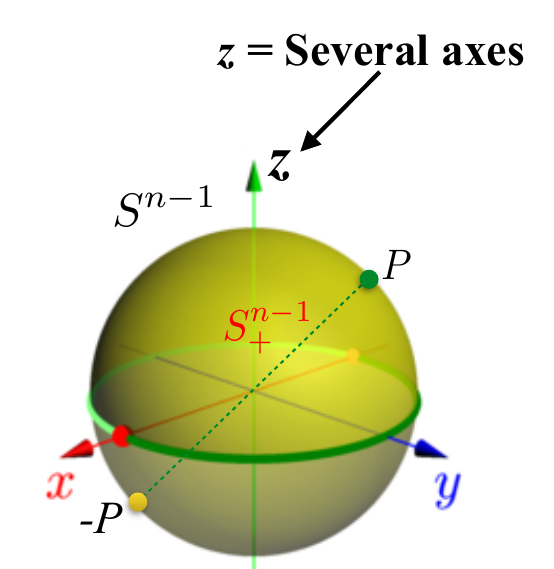}
\caption{ \label{Proj_Space3} The Projective Space $\mathbb{RP}^{n-1}$ is diffeomorphic to $S_+^{n-1}$ made by the Upper hemisphere + Half equator (without the red and yellow points) + the Red point.}
\end{figure*}
\noindent Observe that $X^{\prime\mu}$ is an arbitrary smooth, degree 1 function of the $X$'s with the property that $X\rightarrow X^\prime$. is invertible.  This implies that   $X^{\prime\mu}/X^{\prime\mu-1}$ is an arbitrary degree 0 function with these same properties.  Now \{$X^{\prime\mu}/X^{\prime\mu-1}$\} are again coordinates of $\mathbb{RP}^{n-1}$.   The fact that $X^{\prime\mu}/X^{\prime\mu-1}$ is degree zero implies that it is invariant under the transformation $X^\mu\rightarrow \alpha X^\mu$ and so the map $X^{\prime\mu}/X^{\prime\mu-1}$ is in fact a map from $\mathbb{RP}^{n-1}\rightarrow \mathbb{RP}^{n-1}$
\[
\mathbb{RP}^{n-1}\rightarrow \mathbb{RP}^{n-1}:\frac{X^\mu}{X^{\mu-1}}\mapsto  \frac{X^{\prime\mu}}{X^{\prime\mu-1}}=\frac{F^\mu\left(\frac{X^2}{X^1},\frac{X^3}{X^2},...,\frac{X^n}{X^{n-1}}\right)}{F^{\mu-1}\left(\frac{X^2}{X^1},\frac{X^3}{X^2},...,\frac{X^n}{X^{n-1}}\right)} .
\]
So we have learned that the TCT yields a map from $\mathbb{RP}^{n-1}$ to itself.  Furthermore, the invertibility condition implies that this map is invertible and the smooth inverse condition implies that this map is a diffeomorphism.  Thus every TCT defines a diffeomorphism of $\mathbb{RP}^{n-1}$ to itself.

\noindent At this point it is tempting to conclude that the group of TCTs is just the group $diff(\mathbb{RP}^{n-1})$ of such diffeomorphisms.  However this is not quite true, because the ratios $X^{\prime\mu}/X^{\prime\mu-1}$ do not contain all the information in the $X^{\prime\mu}$. To reconstruct all the $X^{\prime\mu}$ from the ratios, one also needs to know, for example, $X^{\prime 1}$ or equivalently the real-valued function $F^1:\mathbb{RP}^{n-1}\rightarrow\mathbb{R}$, which intuitively gives the overall scale dependence of the TCT.  Therefore the group, $G$, of TCTs is a product of  $diff(\mathbb{RP}^{n-1})$ with the multiplicative group of maps from $\mathbb{RP}^{n-1}$ to the nonvanishing reals $\mathbb{R}^\times$ where the nonvanishing condition is needed to ensure nondegeneracy.   

\subsection{A Subtlety}

This is the right answer locally.  Globally there is one subtlety, we have double counted the map which flips the sign of all of the forces $X$.  It was the element $\alpha=-1$ which we have quotient when constructing $\mathbb{RP}^{n-1}$ from $\mathbb{R}^n$.  More precisely, $\mathbb{RP}^{n-1}$ can be constructed from $\mathbb{R}^n$ minus the origin in two steps.  First quotient by the maps $X\mapsto \alpha X$ with $\alpha$ positive, yielding the sphere $S^{n-1}$, and then quotient by $\alpha=-1$ yielding $\mathbb{RP}^{n-1}$.  This second action, whose quotient maps $S^{n-1}$ to $\mathbb{RP}^{n-1}$, has the same action on the X's as the map $-1$ in $\mathbb{RP}^{n-1}\rightarrow\mathbb{R}^\times$.   How does this double counting affect $G$?

\noindent Given a TCT one may calculate the ratio $\{X^{\prime\mu}/X^{\prime\mu-1}\}$. Since $G$ is the group of TCTs while $diff(\mathbb{RP}^{n-1})$ is the group of maps $\{X^{\prime\mu}/X^{\prime\mu-1}\}$, there must exist a projection $G\rightarrow diff(\mathbb{RP}^{n-1})$.  The argument above implies that the kernel of this projection is the space of nonvanishing maps from $\mathbb{RP}^{n-1}$ to $\mathbb{R}^\times$.  Therefore the group $G$ of TCTs is a bundle whose base is $diff(\mathbb{RP}^{n-1})$ and whose fiber is the space of maps $\mathbb{RP}^{n-1}\rightarrow\mathbb{R}^\times$.  Which bundle is it?

\noindent When traversing a noncontractible loop in $\mathbb{RP}^{n-1}$, which necessarily lifts in $S^{n-1}$ to a path between two antipodal points, the sign of the $\mathbb{R}^\times$ must change.  This means that the group of scalings $\mathbb{RP}^{n-1}\rightarrow\mathbb{R}^\times$ is nontrivial fibered over $diff(\mathbb{RP}^{n-1})$ such that, upon traversing the nontrivial cycle once, the sign of $\mathbb{R}^\times$ is inverted. 

\noindent Assembling all these arguments we arrive at our final result. The group G of TCTs  is the nontrivial bundle of the maps $\mathbb{RP}^{n-1}\rightarrow\mathbb{R}^\times$ over $diff(\mathbb{RP}^{n-1})$. There is a simple mathematical formulation for this group $G$.  Let $P:\mathbb{R}^n\backslash\{0\}\rightarrow \mathbb{RP}^{n-1}$ be the quotient $X\sim \alpha X$ which defines the real projective space $\mathbb{RP}^{n-1}$. Then the group $G$ of TCTs is the group of maps $f:\mathbb{RP}^{n-1}\rightarrow \mathbb{R}^n\backslash\{0\}$ such that $P\circ f:\mathbb{RP}^{n-1}\rightarrow\mathbb{RP}^{n-1}$ is a diffeomorphism.  Note that given  a TCT, $F:\mathbb{R}^n\rightarrow\mathbb{R}^n$  is given by $F(0)=0$ and away from the origin $F=f\circ P$. This construction is summarized in the commutative diagram
\begin{equation}\label{topology2}
\xymatrix{
\mathbb{R}^n\backslash\{0\}\ar[d]_{P}\ar[r]^{f\circ P}&\mathbb{R}^n\backslash\{0\}\ar[d]_{P}\\
\mathbb{RP}^{n-1}\ar[ur]^{f}\ar[r]^{P\circ f}&\mathbb{RP}^{n-1}
} 
\end{equation}
\noindent where the definition of the group $G$  of TCTs is the set of maps $f$ such that the diagram commutes and $P\circ f$ is a diffeomorphism.

\subsection{Examples}

The simplest example is the case $n=1$, where there is only one force, $X$.  Now $\mathbb{RP}^{n-1}$ is just a point.   The group of diffeomorphisms of the point is a trivial group, consisting of only the identity element.  Any bundle over a point is trivial, so in this case the total space of the bundle is just $\mathbb{R}^\times$ itself and so the group of TCTs is the group of maps from the point to $\mathbb{R}^\times$ which is just $\mathbb{R}^\times$ itself, the multiplicative group of nonvanishing real numbers $\alpha$.  The action of this group on the force $X$ is just multiplication by $\alpha$.  So there is a one to one correspondence between TCTs and nonzero real numbers $\alpha$.  Therefore we find that if there is only 1 thermodynamic force, then the TCTs are linear.

\noindent The case $n=2$ shows the full structure of the group.  The projective space $\mathbb{RP}^1$ is a semicircle with both extremes identified, which topologically is just the circle $S^1$.  Therefore the group of TCTs is locally the product of the group of diffeomorphisms of the circle, which physically describe the mixing between $X^1$ and $X^2$, with the group of scalings $S^1\rightarrow\mathbb{R}^\times$.  Now a rotation of the $(X^1,X^2)$ plane by 180 degrees is a rotation of $\mathbb{RP}^1$ all the way around and so it acts trivially on $\mathbb{RP}^1$.  However it corresponds to the element $-1$ of the maps from $\mathbb{RP}^1$ to $\mathbb{R}^\times$. So indeed the group $G$ is not simply a product of the groups of scalings and rotations, the scalings are nontrivially fibered over the rotations.  

\section{Algebraic Description of the TCT-group}\label{algebra}

In this section we shall provide the algebraic description of the TCT-group. In particular, we shall define the TCT-group and we enunciate the theorem satisfied by the TCT- group. Details related to the demonstration of this theorem can be found in the Appendix.

\noindent Let $S^{n-1}$ the $n-1$ dimensional unit sphere ($\left\Vert \mathbf{x}\right\Vert =1$), represented as a $C^\infty$ differentiable manifold, as a submanifold embedded in $\mathbb{R}^{n}$. Define the equivalent relation $\mathbb{R}$ as follows : $\mathbf{x}$, $ \mathbf{y}\in S^{n-1}$ are equivalent iff $\mathbf{y=\pm x}$. Denote by $\Gamma^p_n$ the subgroup of $Diff$ ($\mathbb{RP}^{n-1}$) and let $\mathbf{Y}\in\Gamma^p_n$ iff $\mathbf{Y}(-\mathbf{x})=-\mathbf{Y}(\mathbf{x})$ where $S^{n-1}\ni\mathbf{x}\rightarrow \mathbf{Y}(\mathbf{x}) \in S^{n-1}$. The TCT-group, denoted by $G^s$, is the subgroup of homogeneous diffeomorphisms from $Diff(\mathbb{R}^n\backslash\{0\})$ i.e., $Diff(\mathbb{R}^n\backslash\{0\})\ni\mathbf{x\mapsto Y}_{g}(\mathbf{x}
)\in Diff(\mathbb{R}^{n}\backslash\{0\})$. Then, $\mathbf{Y}_{g}\in G^n$ iff
\begin{equation}\label{algebra1}
\mathbf{Y}_{g}(\lambda\mathbf{x}) =\lambda\mathbf{Y}_{g}(\mathbf{x}
);~\lambda\in\mathbb{R},~g\in G^{n}
\end{equation}
\noindent It is possible to demonstrate that the TCT-group $G^n$ may be split in a semidirect product of two subgroups where the first one is a Normal, abelian, subgroup. In particular, let us introduce two subgroups $N^n$ and $H^n$ defined as follows.

\noindent Let $N^{n}$ denote the subset (normal subgroup) of $G^{n}$ having the form
\begin{equation}\label{algebra2}
\mathbf{Y}_{g_{{}}}(\mathbf{x})=\mathbf{x~}r_{g}(\mathbf{x})~;~g\in
N^{n}\subset G^{n}
\end{equation}
\noindent with $r_{g}(\mathbf{x})$ denoting a positive $C^{\infty}(\mathbb{R}^{n}
\backslash\{0\})$ homogeneous function, i.e., 
\begin{equation}\label{algebra3}
r_{g}(\lambda\mathbf{x})=r_{g}(\mathbf{x})>0;~\lambda\in\mathbb{R}
\end{equation}
\noindent Denote by $H^{n}$ the subgroup of $G^{n}$ with the properties
\begin{equation}\label{algebra4}
\left\Vert \mathbf{Y}_{h}(\mathbf{x})\right\Vert  =\left\Vert
\mathbf{x}\right\Vert \quad ; \quad
\mathbf{Y}_{h}(-\mathbf{x})  =-\mathbf{Y}_{h}(\mathbf{x})\quad {\rm with} \quad
h \in H^{n}
\end{equation}
\noindent As it is proved in the Appendix, {\it the TCT-group is the semidirect product of the abelian normal subgroup $N^n$ and the subgroup} $H^n$, i.e., 
\begin{equation}\label{algebra5}
G^{n}=N^{n}~\rtimes H^{n}
\end{equation}
\noindent The irreducible representations of the group $G$ are then related to the irreducible representations of the subgroups $H$ and $N$. In the previous section we have shown that $G$ is a bundle whose base is $Diff$ ($\mathbb{RP}^{n-1}$). Expression~(\ref{algebra5}) specifies, in more rigorous terms, which bundle it is.

\subsection{Properties of the General Element of the TCT-Group}
From Eq.~(\ref{topology1}), we easily get (no Einstein's convention on the repeated indexes)
\begin{align}\label{algebra6}
&U^\mu_\nu\equiv\frac{\partial {X^{}}^{'\mu}}{\partial {X^{}}^\nu}={F^{}}^\mu\delta^1_{\nu}+\frac{\partial {F^{}}^\mu}{\partial {Y^{}}^\nu}\frac{X^1}{{X^{}}^{\nu-1}}(1-\delta^1_\nu)-\frac{{X^{}}^1{X^{}}^{\nu+1}}{({{X^{}}^\nu})^2}\frac{\partial {F^{}}^\mu}{\partial {Y^{}}^{\nu+1}}(1-\delta^n_\nu)\quad {\rm with} \nonumber\\
&Y^\nu\equiv\frac{X^\nu}{X^{\nu-1}}\qquad ;\qquad \mu ,\ \nu=1\cdots, n
\end{align}
\noindent As shown in \cite{sonnino}, matrix $U^\mu_\nu$ satisfies the important relations 
\begin{equation}\label{algebra7}
X^\nu\frac{\partial U_\nu^\mu}{\partial X^\kappa}=0\qquad ; \qquad X^\kappa\frac{\partial U_\nu^\mu}{\partial X^\kappa}=0
\end{equation}
\noindent Close to the identity, it is useful to write the TCT as 
\begin{equation}\label{algebra8}
\qquad
\left\{ \begin{array}{ll}
U^\mu_\nu=\delta^\mu_\nu+\epsilon^\alpha\delta U^\mu_{\nu(\alpha)} & \ \ \mbox{{\rm with}}\\
X^{'\mu}=X^\mu+\epsilon^\alpha\xi^\mu_{(\alpha)} & \ \ \mbox{}
\end{array}
\right.
\quad
\left\{ \begin{array}{ll}
\delta U^\mu_{\nu(\alpha)}=\omega^\mu_{\nu(\alpha)}+X^\kappa\partial_\nu\omega_{\kappa (\alpha)}^\mu  \\
\xi^\mu_{(\alpha)}=\omega^\mu_{\nu(\alpha)}X^\nu 
\end{array}
\right.
\end{equation}
\noindent where $\epsilon^\alpha$ are infinitesimal parameter coefficients.

\subsection{Examples}

\noindent {\bf Linear TCT}

\noindent In general, the TCT-group, $G^n$, is a non-compact, infinite Lie-group. However, $G^n$ admits several compact and finite subgroups. Linear transformations of the thermodynamic forces are an important subgroup of the $G^n$. A significant example is the two-dimensional linear transformations
\begin{equation}\label{algebra8}
\left\{ \begin{array}{ll}
X^{'1} =a_1X^1+\epsilon_1 a_2X^2=X^1+\epsilon_1\xi^1& \ \ \mbox{}\\
X^{'2} =\epsilon_2b_1X^1+ b_2X^2=X^2+\epsilon_2\xi^2 & \ \ \mbox{}
\end{array}
\right.
\qquad {\rm with}\qquad
\left\{ \begin{array}{ll}
a_1-1=\epsilon_1\alpha_1 \\
b_2-1=\epsilon_2\beta_2
\end{array}
\right.
\end{equation}
\noindent and
\begin{equation}\label{algebra9}
\omega^\mu_\nu=
\begin{pmatrix}
\alpha_1 & a_2\\
b_1 & \beta_2
\end{pmatrix}
\qquad {\rm and} \qquad
\xi^\mu=
\begin{pmatrix}
\alpha_1X^1+ a_2X^2\\
b_1X^1+\beta_2X^2
\end{pmatrix}
\end{equation}
\noindent The four generators of the group are
\begin{equation}\label{algebra10}
t_1=-iX^1\partial_{X^1}\quad ;\quad t_2=-iX^1\partial_{X^2}\quad ; \quad t_3=-iX^2\partial_{X^1}\quad ; \quad t_4=-iX^2\partial_{X^2}
\end{equation}
\noindent The Lie algebra reads
\begin{align}\label{algebra11}
&\big[ t_\mu ,t_\mu\big]=0\quad ; \quad [t_\mu, t_\nu]=-[t_\nu,t_\mu]\quad ; \quad [t_2, t_4]=-it_2\quad ; \quad  [t_3, t_4]=it_3\nonumber \\
& [t_1, t_2]=-it_2\quad ; \quad  [t_1, t_3]=it_3\quad ; \quad [t_1, t_4]=0\quad ; \quad [t_2, t_3]=it_4-it_1
\end{align}
\noindent From the Lie-algebra, we may construct the adjoint representations of the generators of the group, $T^{(\kappa)}_{\mu\nu}$, through the structure constants
\begin{equation}\label{algebra12}
T^{(\kappa)}_{\mu\nu}=if^\nu_{\mu(\kappa)}\qquad {\rm with}\qquad [t_\mu, t_\kappa]=f^\nu_{\mu(\kappa)}t_\nu
\end{equation}
\noindent We get
\begin{align}\label{algebra13}
&T^{(1)}_{\mu\nu}=
\begin{pmatrix}
0 & 0 & 0 & 0\\
0 & i & 0 & 0\\
0 & 0 & -i & 0\\
0 & 0 & 0 & 0
\end{pmatrix}
\qquad {\rm ;} \qquad
T^{(2)}_{\mu\nu}=
\begin{pmatrix}
0 & -i & 0 & 0\\
0 & 0 & 0 & 0\\
i & 0 & 0 & -i\\
0 & i & 0 & 0
\end{pmatrix}\nonumber \\
&T^{(3)}_{\mu\nu}=
\begin{pmatrix}
0 & 0 & i & 0\\
-i & 0 & 0 & i\\
0 & 0 & 0 & 0\\
0 & 0 & -i & 0
\end{pmatrix}
\qquad {\rm ;} \qquad
T^{(4)}_{\mu\nu}=
\begin{pmatrix}
0 & 0 & 0 & 0\\
0 & -i & 0 & 0\\
0 & 0 & i & 0\\
0 & 0 & 0 & 0
\end{pmatrix}
\end{align}
\noindent It is worth mentioning that the previous transformations play an important role in physics, for example in Tokamak-plasmas in fully collisional transport regime (the so-called {\it Pfirsch-Schl$\ddot u$ter transport regime}) \cite{balescu2} and \cite{sonnino3}. Here the two thermodynamic forces read $X^1=(n_eT_e)^{-1}\nabla_rP$ and $X^2=-T_e^{-1}\nabla_rT_e$, with $n_e$, $T_e$ and $P$ denoting the electron density number, the electron temperature and the total pressure of the plasma, respectively. In this case, the subgroup of $G^n$ is finite and compact. 

\noindent Another example of linear TCT, widely used in Tokamak-plasmas in the weak-collisional transport regime (the so called {\it banana regime}), is provided by the Hinton-Hazeltine transformations \cite{hinton}. In this case, the TCT read
\begin{equation}\label{algebra9a}
\left\{ \begin{array}{ll}
X^{'1}=X^1-\frac{5}{2}X^2-\frac{5}{2}Z^{-1}X^3 \\
X^{'2}=X^2\\
X^{'3}=X^3\\
X^{'4}=X^4\\
\end{array}
\right.
\end{equation}
\noindent with $Z$ denoting the charge number. In this particular case, $X^1=-(n_eT_e)^{-1}\nabla_rP$, $X^2=-T_e^{-1}\nabla_rT_e$, $X^3=-T_i^{-1}\nabla_rT_i$, and $X^4=<B^2>^{-1/2}<B E_\parallel^A>$, with $B$ and $E^A_\parallel$ denoting the {\it intensity of the magnetic field} and the {\it electric field generated by the external coils, parallel to the magnetic field,} respectively. The angular brackets denotes the {\it averaged magnetic surface operation} (see, for example, \cite{balescu2}). In this case the space of the thermodynamic forces is four-dimensional. The TCT-group possesses 16 generators, which are similar to the ones given by Eqs~(\ref{algebra10}), with adjoint representations also similar to Eqs~(\ref{algebra13}), but the dimension of the matrices are $16\times 16$. Note that also in this case, the TCT-subgroup is compact and finite.
\vskip 0.5truecm 
\noindent {\bf Nonlinear TCT}

\noindent The linear transformations are an example of (closed) sub-algebra of the TCT-group. However, it is easy to convince ourselves that nonlinear examples of TCT sub-algebras may also be found. Consider, for example the following TCT
\begin{equation}\label{algebra14}
\left\{ \begin{array}{ll}
X^{'1} =a_1X^1& \ \ \mbox{}\\
X^{'2} =b_1X^1+ b_2\Bigl(\frac{X^2}{X^1}\Bigr)X^2& \ \ \mbox{}
\end{array}
\right.
\qquad {\rm with}\qquad
\left\{ \begin{array}{ll}
a_1-1=\epsilon_1\alpha_1 \\
b_2-1=\epsilon_2\beta_2
\end{array}
\right.
\end{equation}
\noindent The three generators of the group and their adjoint representations read, respectively
\begin{align}\label{algebra15}
&t_1=-iX^1\partial_{X^1}\quad ;\quad t_2=-iX^2\partial_{X^2}\quad ; \quad t_3=-i\Bigl(\frac{X^2}{X^1}\Bigr)X^2\partial_{X^2} \nonumber \\
&T^{(1)}_{\mu\nu}=
\begin{pmatrix}
0 & 0 & 0 \\
0 & 0 & 0 \\
0 & 0 & -i 
\end{pmatrix}
\quad {\rm ;} \quad
T^{(2)}_{\mu\nu}=
\begin{pmatrix}
0 & 0 & 0 \\
0 & 0 & 0 \\
0 & 0 & i 
\end{pmatrix}
\quad {\rm ;} \quad
T^{(3)}_{\mu\nu}=
\begin{pmatrix}
0 & 0 & i \\
0 & 0 & -i \\
0 & 0 & 0 
\end{pmatrix}
\end{align}

\section{Noether's Current for Fully Collisional Tokamak-plasmas}\label{noether}

As known, through Noether's theorem one can determine the conserved quantities from the observed symmetries of a physical system. In particular, consider the action 
\begin{equation}\label{n1}
I=\int\mathcal{L}(\Phi^A,\partial_\mu\Phi^A, X^\mu)\sqrt{g}dX^n
\end{equation}
\noindent with $\phi^A$ denoting the set of differentiable fields defined over all space of the thermodynamic forces and $\mathcal{L}$ the Lagrangian density, respectively. In our case $\Phi^A=\{g_{\kappa\nu}, f_{\kappa\nu},\Gamma^\lambda_{\kappa\nu}\}$. Let the action be invariant under certain transformations of the thermodynamic forces coordinates $X^\mu$ and the field $\Phi^A$
\begin{equation}\label{n2}
\left\{ \begin{array}{ll}
X^\mu\rightarrow X^\mu+\delta X^\mu=X^\mu+\epsilon_\alpha\xi^\mu_{(\alpha)} \\
\Phi^A(X)\rightarrow\Phi^A(X)+\delta\Phi^A(X)=\Phi^A(X)+{\bar\delta}\Phi^A(X)+{\tilde\delta}\Phi^A(X)=\Phi^A(X)+\epsilon_\alpha\Psi^A_{(\alpha)}(X)
\end{array}
\right.
\end{equation}
\noindent where $\delta\Phi^A$ denotes the {\it transformation in the field variables}, ${\bar\delta}\Phi^A$ the {\it intrinsic changes of the field}, and ${\tilde\delta}\Phi^A$ the {\it transformation of the field variables due to the coordinates variation}, respectively. Noether's theorem states that $N$ {\it currents densities} are conserved, with $N$ equals to the number of generators of the Lie group associated to the TCT \cite{noether}, \cite{noether1}. In our case, the action remains invariant only under TCT (and not under the field transformations). Hence, the expressions of the $N$ Noether currents $j^\mu_\alpha$ reduce to
\begin{eqnarray}\label{n3}
&&j^\mu_\alpha=\frac{\partial L}{\partial\Phi^A_{,\mu}}{\mathcal L}_{{\mathbf\xi}_\alpha}\Phi^A-L\xi^\mu_\alpha\qquad{\rm with} \\
&&\partial_\mu\bigl[\sqrt{g}J^\mu_\alpha\bigr]=0\quad;\quad J^\mu_\alpha\equiv g^{-1/2} j^\mu_\alpha \quad;\quad(\alpha=1,\cdots ,N)\nonumber
\end{eqnarray}
\noindent Here, ${\mathcal L}_\xi$ denotes the Lie derivatives along the $\xi^\mu_\alpha$ vector and $L\equiv{\mathcal L}\sqrt{g}$, respectively.

\noindent As an example of application, let us consider the action (\ref{pa5}) and the case of Tokamak-plasmas in fully collisional transport regime, with the TCT given by Eqs~(\ref{algebra8})-(\ref{algebra11}). After (some tedious) calculations, and under the realistic approximation $1/\sigma\ll 1$ valid for Tokamak-plasmas, we finally get (see Appendix)
\begin{equation}\label{n4}
J^{\mu\lambda}_\nu=\frac{1}{2}g^{\kappa\lambda}A^{\mu\eta}_{\nu\kappa\eta}+\frac{1}{2}g^{\beta\lambda}A^{\mu\eta}_{\nu\eta\beta}-g^{\kappa\beta}A^{\mu\lambda}_{\nu\kappa\beta}\ \ {\rm ;}\ \ A^{\mu\eta}_{\nu\kappa\beta}=X^\mu\Gamma^\eta_{\kappa\beta, \nu}-\Gamma^\mu_{\kappa\beta}\delta^\alpha_\nu+\Gamma^\alpha_{\nu\beta}\delta^\mu_\kappa+\Gamma^\mu_{\nu\kappa}\delta_\beta^\mu
\end{equation}
\noindent Note that there are no Noether's currents in the fully collisional transport regimes since in this case  all derivatives of the transport coefficients with respect to $X^\mu$ (and hence, $\Gamma^\kappa_{\mu\nu}$ and its derivatives $\Gamma^\kappa_{\mu\nu,\eta}$) are identically equal to zero. These currents appear only in the nonlinear transport regime where the derivatives of the transport coefficients with respect to the thermodynamic forces do not vanish \cite{sonnino}. Even though calculations are more complex, it is possible to show that the above conclusions apply also to the weak collisional (banana) and the plateau transport regimes. Fig.~\ref{noether1} shows one component of Noether's current, $J^{11}_1$, against the two thermodynamic forces $X^1$ and $X^2$. The contour plot of this current is illustrated in Fig.~(\ref{noether2}) [these graphics have been produced by Philippe Peeters, from the Universit{\'e} Libre de Bruxelles (ULB) - Brussels (Belgium)]. 

\begin{figure*}
\hfill 
\begin{minipage}[t]{.45\textwidth}
    \begin{center}  
\hspace{-1.2cm}
\resizebox{1\textwidth}{!}{%
\includegraphics{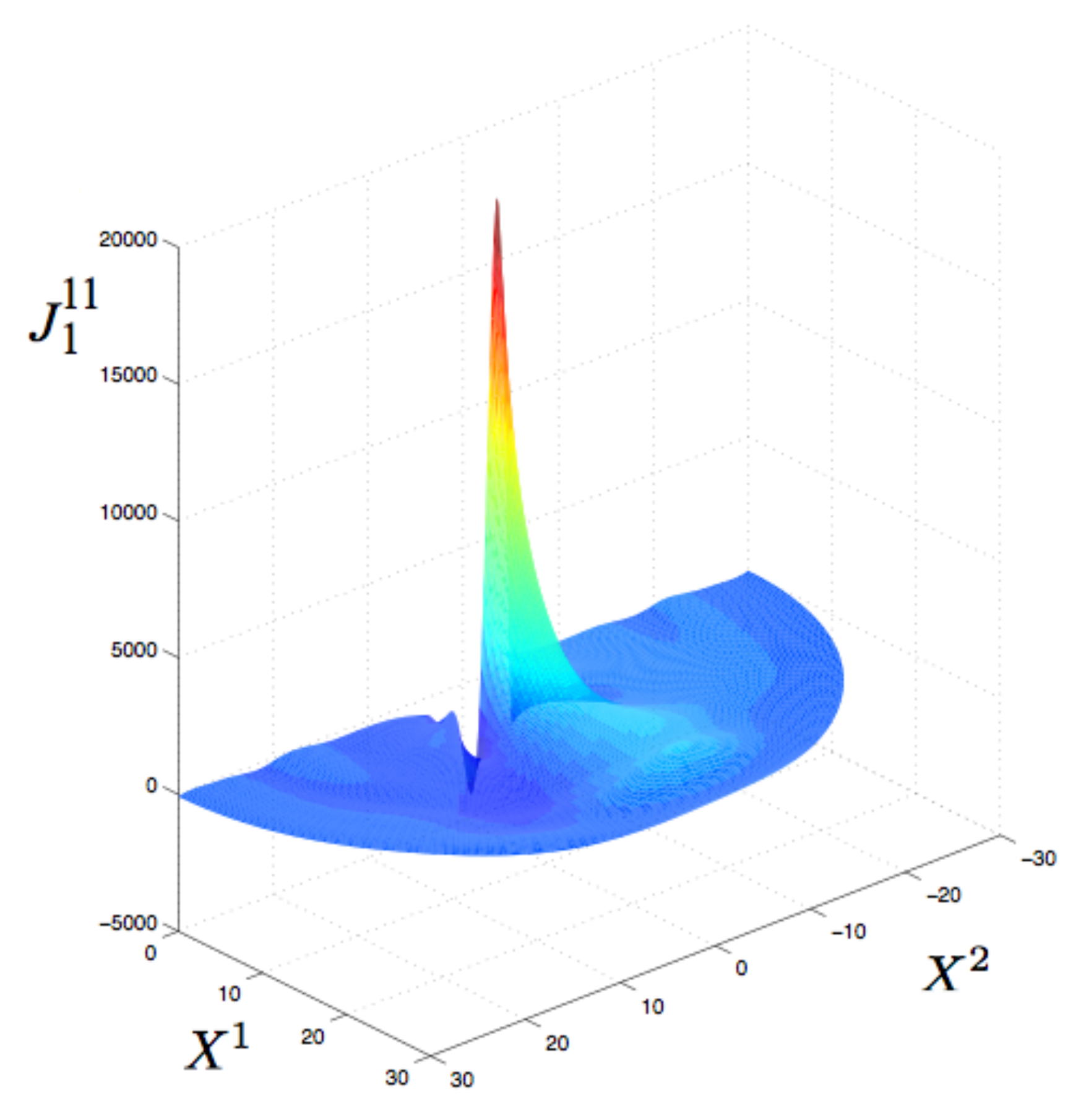}
}
\caption{ \label{noether1}Component $J^{11}_1$ corresponding to Eq.~(\ref{n3}).}
\end{center}
  \end{minipage}
\hfill 
\begin{minipage}[t]{0.5\textwidth}
    \begin{center}
\hspace{-0.9cm}
\resizebox{1\textwidth}{!}{%
\includegraphics{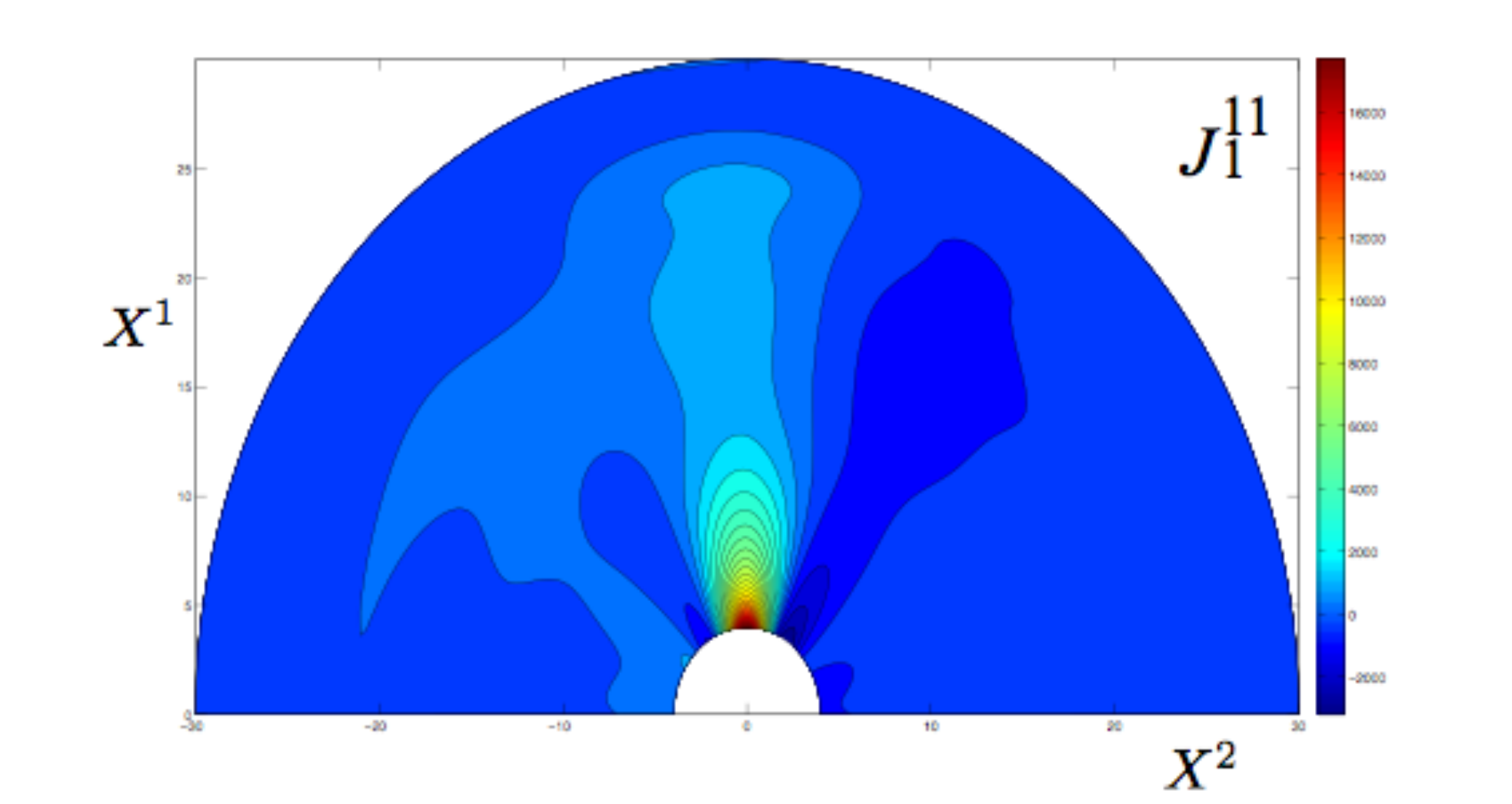}
}
\caption{Contour plot of the $J^{11}_1$ profile.}
\label{noether2}
\end{center}
  \end{minipage}
\hfill
\end{figure*}

\section{Derivation of the Collisional Operator that ensures, at the lowest order, the Covariance under TCT of the Closure Transport Relations}\label{kinetic}

The aim of this section is to derive the expression of the collisional operator for magnetically confined plasmas, which guarantees that the Thermodynamic Covariance Principle (TCP) is satisfied by the closure transport relations (i.e., the flux-force relations). Let us consider a two-component system of charged particles. The statistical state is represented by two reduced distribution functions $f^\alpha$ corresponding to ions $i$ and electrons $e$ \cite{balescu1} (no Einstein's convention on index $\alpha$)   
\begin{equation}\label{c1}
\frac{\partial}{\partial t}f^\alpha ({\bf q},{\bf v},t)=-{\bf v}\cdot\frac{\partial}{\partial{\bf q}}f^\alpha({\bf q},\bf{v},t)-\frac{e_\alpha}{m_\alpha}\Bigl({\bf E}({\bf q},t)+\frac{1}{c}{\bf v}\wedge{\bf B}({\bf q})\Bigr)\cdot\frac{\partial}{\partial{\bf v}}f^\alpha({\bf q},\bf{v},t)+C^\alpha(f,f)
\end{equation}
\noindent Here $\alpha =e,i$ and $c$ is the speed of light in vacuum. Moreover, $e_\alpha$, and $m_\alpha$ are the {\it charge} and the {\it mass} of species $\alpha$, ${\bf q}$ and {\bf v} denote the {\it generalized coordinates} and the {\it velocity of the particle}, and ${\bf E}$ and ${\bf B}$ the {\it electric} and the {\it magnetic fields}, respectively. Note that the first term on the r.h.s. of Eq.~(\ref{c1}) represents the {\it free flow}, the second term corresponds to the {\it electromagnetic contribution}, and the last term is the contribution due to {\it collisions}. In many applications in plasmas physics (including those involving the radio-frequency waves) collisions dominate the thermal particles. Therefore, the distribution function can conveniently be expanded about a Maxwellian 
\begin{equation}\label{c2}
f^\alpha ({\bf x},{\bf v},t)=f^\alpha_{eq.}({\bf x},{\bf v},t)[1+\chi({\bf x},{\bf v},t)]
\end{equation}
\noindent where {\bf x} denotes the {\it position of the particle}. In Eq.~(\ref{c2}) we have introduced the {\it reference state} $f^\alpha_{eq.}({\bf x},{\bf v},t)$, i.e. the local plasma equilibrium (L.P.E.), and the {\it deviation from the reference state} $\chi$. The local plasma equilibrium is defined in the following way. The electron-electron and ion-ion collisions bring the plasma in a short time to a state of local plasma equilibrium satisfying the equations 
\begin{equation}\label{c2a}
C^{ee}=C^{ii}=0
\end{equation}
\noindent with $C^{ee}$ and $C^{ii}$ denoting the electron-electron and ion-ion collisions, respectively (see below the expression of the collisional operator). The L.P.E. is the solution of Eqs~(\ref{c2a}):
\begin{equation}\label{c3}
f^\alpha_{eq.}({\bf x},{\bf v},t)=(2\pi)^{-3/2}n_\alpha({\bf x},t)\Bigl(\frac{m_\alpha}{T_\alpha({\bf x},t)}\Bigl)^{3/2}\!\!\!\!\exp(-{\bf c}_N\cdot{\bf c}_N)\ ;\ {\bf c}_N\equiv\Bigl(\frac{m_\alpha}{T_\alpha}\Bigr)^{1/2}[{\bf v}-{\bf u}({\bf x},t)]
\end{equation}
\noindent where ${\bf u}^\alpha$, $n_\alpha$, $T_\alpha$ are the {\it mean velocity}, the {\it number density} and {\it temperature}, respectively. The deviation $\chi$ may be developed in terms of the Hermite polynomials $H^{(m)}_{r_1r_2\cdots}$
\begin{eqnarray}\label{c4}
\chi^\alpha({\bf x},{\bf c}_N, t)=&&\sum_{n=0}^\infty q^{\alpha(2n)}({\bf x},t)H^{(2n)}({\bf c}_N)+\sum_{n=0}^\infty q_r^{\alpha(2n+1)}({\bf x},t)H_r^{(2n+1)}({\bf c}_N)+\nonumber \\
&&\sum_{n=0}^\infty q_{rs}^{\alpha(2n)}({\bf x},t)H_{rs}^{(2n)}({\bf c}_N)+\cdots
\end{eqnarray}
\noindent with $q^{\alpha(m)}({\bf x},t)$ denoting the hermitian moments \cite{balescu1}. The Landau collisional operator can be brought into the form \cite{balescu2}, \cite{landau}
\begin{align}\label{c5}
&C^\alpha(f,f)=\sum_{\beta=e,i}C^{\alpha\beta}\qquad {\rm with}\\
&C^{\alpha\beta}(f^\alpha(1),f^\beta(2))=2\pi e^2_\alpha e^2_\beta\log\Lambda\int d{\bf v}_2{\tilde\partial}_r G_{rs}({\bf v}_1-{\bf v}_2){\tilde\partial}_sf^\alpha (1)f^\beta(2)\nonumber 
\end{align}
\noindent with $r, s$ identifying the components of a vector, and indexes $(1)$ and $(2)$ the colliding particles $(1)$ and $(2)$, i.e. $(1)\equiv ({\bf x}_1, {\bf v}_1, t)$ and $(2)\equiv ({\bf x}_2, {\bf v}_2, t)$. Here, $\log\Lambda$ and $G_{rs}$ are the {\it Coulomb logarithm} (linked to the {\it Debye length} $\lambda_D$) and the {\it Landau tensor}, respectively, i.e.
\begin{equation}\label{c6}
\log\Lambda=\ln\frac{3(T_e+T_i)\lambda_D}{2Ze^2}\ \ ; \ \  \lambda_D=\Bigl(\frac{4\pi Ze^2(n_eT_e+n_iT_i}{T_eT_i(1+Z)}\Bigr)^{-1/2}\ \ ;\ \ G_{rs}({\bf a})=\frac{a^2\delta_{rs}-a_ra_s}{a^3}
\end{equation}
\noindent with ${\bf a}$ denoting the {\it relative velocity of two particles} i.e., ${\bf a}\equiv {\bf v}_1-{\bf v}_2$. The operator ${\tilde\partial}_r$ is defined as follows
\begin{equation}\label{c7}
{\tilde\partial}_r\equiv m_\alpha^{-1}\partial_{v_{1r}}-m_\beta^{-1}\partial_{v_{2r}}
\end{equation}
\noindent By inserting Eqs~(\ref{c2})-(\ref{c4}) into Eq.~(\ref{c1}), and by truncating the expansion up to the second order of the (small) {\it drift parameter} $\epsilon$ (defined as the Larmour radius over a macroscopic length), we get the {\it vector moment equations} \cite{balescu2}
\begin{eqnarray}\label{c8}
&&\Omega_{\alpha}\tau_{\alpha}\epsilon_{rmn}q_m^{\alpha (1)}b_n+\tau_{\alpha}Q_r^{\alpha (1)}+g_r^{\alpha (1)}+{\bar g}_r^{\alpha (1)}+O(\epsilon^2)=0\nonumber \\
&&\Omega_{\alpha}\tau_{\alpha}\epsilon_{rmn}q_m^{\alpha (3)}b_n+\tau_{\alpha}Q_r^{\alpha (3)}+g_r^{\alpha (3)}+{\bar g}_r^{\alpha (3)}+O(\epsilon^2)=0\nonumber \\
&&\Omega_{\alpha}\tau_{\alpha}\epsilon_{rmn}q_m^{\alpha (5)}b_n+\tau_{\alpha}Q_r^{\alpha (5)}+{\bar g}_r^{\alpha (5)}+O(\epsilon^2)=0\nonumber \\
&& Q_r^{\alpha(m)}=n_\alpha^{-1}\int d{\bf v} H_r^{(m)}\Bigl((m_\alpha/T_\alpha)^{1/2}({\bf v}-{\bf u}^\alpha)\Bigr)C^\alpha
\end{eqnarray}
\noindent where the Einstein convention is adopted on the repeated indexes $m$ and $n$, but not on the index $\alpha$. Here, $b_n$ is a unit vector along the magnetic field {\bf B} i.e., $b_n\equiv B_n/B$, $\epsilon_{rmn}$ is the completely antisymmetric Levi-Civita symbol, $\Omega_{\alpha}$ is the Larmor frequency of species $\alpha$ and $\tau_\alpha$ is the {\it relaxation time of species} $\alpha$, respectively. Moreover, $g_r^{\alpha (n)}$, ${\bar g}_r^{\alpha(n)}$, and $Q_r^{\alpha (n)}$ are the {\it dimensionless source terms related to the thermodynamic forces}, the {\it additional sources terms in the long mean free path transport regime}, and the {\it dimensionless friction terms}, respectively (the exact definitions of these quantities may be found in Ref.~\cite{balescu2}).

\noindent For collision-dominated plasmas (i.e., in absence of turbulence), the entropy production, $\Sigma^\alpha$, of the plasma for species $\alpha$ may be brought into the form \cite{balescu2}
\begin{equation}\label{c9}
\Sigma^\alpha=-\tau_\alpha\sum^N_{n=0(1)}q_r^{\alpha(n+1)}Q_r^{\alpha(2n+1)}
\end{equation}
\noindent The lower limit for $n$ is $0$ for the electrons and $1$ for the ions. Hence, thanks to this theorem, $Q_r^{\alpha(n)}$ and $q_r^{\alpha (n)}$ are the {\it thermodynamic forces} and the {\it thermodynamic fluxes} for magnetic confined plasmas, respectively. Eq.~(\ref{c9}) tells us that the last equation of Eqs~(\ref{c8}) is the closure equation (flux-forces relation) for Tokamak-plasmas, derived by kinetic theory. The region where the transport coefficients do not depend on the thermodynamic forces is referred to as {\it Onsager's region} or, the {\it linear thermodynamic  regime}. A well-founded microscopic explanation on the validity of the linear phenomenological laws was developed by Onsager in 1931 \cite{onsager}-\cite{onsager1}. Onsager's theory is based on three assumptions: i) {\it The probability distribution function for the fluctuations of thermodynamic quantities} (Temperature, pressure, degree of advancement of a chemical reaction etc.) {\it is a Maxwellian} ii) {\it Fluctuations decay according to a linear law} and iii) {\it The principle of the detailed balance} (or the microscopic reversibility) {\it is satisfied}. Out of Onsager's regime, the transport coefficients may depend on the thermodynamic forces. This happens when the above-mentioned assumption 1) end/or assumption 2) are/is not satisfied. Magnetically confined tokamak plasmas are a typical example of thermodynamic systems out of Onsager's region. In this case, even in absence of turbulence, the local distribution functions of species (electrons and ions) deviate from the (local) Maxwellian [see Eq.~(\ref{c2})]. After a short transition time, the plasma remains close to (but, it is not in) a state of local equilibrium (see, for example, \cite{sonnino3}, \cite{balescu2}). The neoclassical theory is a linear transport theory (see, for example, \cite{balescu2}) meaning by this, a theory where the moment equations are coupled to the closure relations (i.e. flux-force relations), which have been linearized with respect to the generalized frictions (see, for example, Ref.~\cite{balescu1}). This approximation is clearly in contrast with the fact that the distribution function of the thermodynamic fluctuations is {\it not} a Maxwellian and it could be a possible cause of disagreement between the theoretical predictions and the experimental profiles \cite{sonnino3}, \cite{sonnino4}. However, it is important to mention that it is well accepted that the main reason of this discrepancy is attributed to turbulent phenomena existing in Tokamak-plasmas. Fluctuations in plasmas can become unstable and therefore amplified, with their nonlinear interaction, successively leading the plasma to a state, which is far away from equilibrium. In this condition, the transport properties are supposed to change significantly and to exhibit qualitative features and properties that could not be explained by collisional transport processes, e.g. size-scaling with machine dimensions and non-local behaviors that clearly point at turbulence spreading etc. (see, for example, Ref.~\cite{diamond}). Hence, the truly complete transport theory of plasmas must self-consistently incorporate the instability theory that includes the influence of nonlinear transformations on fluctuations. This global approach is the purpose of the so-called {\it anomalous transport theory} (still far from a complete and comprehensive theory). This type of problem is, however, far beyond the scope of the present work. Here, more modestly, we deal with plasmas in the collisional-dominated transport regime, characterized by a time-scale which is much longer than one involved in the so-called {\it fluctuation-induced turbulence transport}.

\noindent Our aim is to determine the simplest expression of the collisional operator such that the resulting closure equation satisfies the TCP (without, of course, violating the energy, mass, and momentum conservation laws). Concretely, in mathematical terms, we need to identify an operator able {\it to kill} the terms that do not satisfy the TCP and, in order not to violate the conservation laws, which commutes with the operator ${\tilde\partial}_r$. The last equation in Eq.~(\ref{c8}) will satisfy the TCP if 
\begin{equation}\label{c10}
{\rm when}\qquad C^{\alpha\beta}\rightarrow \lambda C^{\alpha\beta}\qquad {\rm then}\qquad Q_r^{\alpha (m)}\rightarrow \lambda Q_r^{\alpha (m)}
\end{equation}
\noindent with $\lambda$ denoting a constant parameter. We introduce now the operator $\mathcal{O}_{TCT}$ defined as follows
\begin{equation}\label{c11}
\mathcal{O}_{TCT}\equiv \Bigl(2-\chi^\alpha(1)\frac{\partial}{\partial\chi^\alpha(1)}-\chi^\beta(2)\frac{\partial}{\partial\chi^\beta(2)}\Bigr)
\end{equation}
\noindent It is easily checked that this operator possesses the following properties
\begin{equation}\label{c12}
\mathcal{O}_{TCT}(\chi^\alpha)=\chi^\alpha\ \; \ \ \mathcal{O}_{TCT}\bigl((\chi^\alpha)^2\bigr)=0\ \ ;\ \ \mathcal{O}_{TCT}(\chi^\alpha(1)\chi^\beta(2))=0\ \ ;\ \ \big[\mathcal{O}_{TCT},{\tilde\partial}_r]=0
\end{equation}
\noindent where the square brackets denote the {\it Lie brackets}. The last equation in Eq.~(\ref{c8}) (the closure equation), satisfies the TCP iff 
\begin{equation}\label{c13}
C^{\alpha\beta}_{TCT}=C^{\alpha\beta}\bigl({\mathcal O}_{TCT}(f(1)f(2)\bigr)
\end{equation}
\noindent or
\begin{equation}\label{c14}
C_{TCT}^{\alpha\beta}=2\pi e^2_\alpha e^2_\beta\log\Lambda\int d{\bf v}_2{\tilde\partial}_r G_{rs}({\bf v}_1-{\bf v}_2){\tilde\partial}_s\Bigl(f^\alpha (1)_{eq.}f^\beta(2)_{eq.}\bigl(\chi^\alpha(1)+\chi^\beta(2)\bigr)\Bigr)
\end{equation}
\noindent Eq.~(\ref{c14}) can conveniently be written in the form 
\begin{align}\label{c15}
&C_{TCT}^{\alpha\beta}=2\pi e^2_\alpha e^2_\beta\ln\Lambda\int d{\bf v}_2{\tilde\partial}_rG_{rs}(\!{\bf v}_1\!-\!{\bf v}_2\!){\tilde\partial}_s[f^\alpha(1)f^\beta_{eq.}(2)]_{\!_+}\quad {\rm with}\\
&[f^\alpha(1)f^\beta_{eq.}(2)]_{\!_+}\equiv f^\alpha(1)f_{eq.}^\beta(2)+f_{eq.}^\alpha(1)f^\beta(2)\nonumber
\end{align}
\noindent Thanks to the last relation in Eqs~(\ref{c12}), we also get  
\begin{align}\label{c16}
&\int d{\bf v}C_{TCT}^{\alpha}=0\qquad\qquad\qquad\  (\alpha=e,i)\qquad\qquad\! {\rm Number\ of\ particles\ conservation}\\
&\sum_\alpha m_\alpha \int d{\bf v}\ v_r C_{TCT}^{\alpha}=0\qquad (r=1,2,3)\qquad\quad\!  {\rm Momentum\ conservation}\\
&\sum_\alpha \frac{1}{2}m_\alpha \int d{\bf v}\ v^2 C_{TCT}^{\alpha}=0\qquad\qquad\qquad\qquad\qquad\!\!\!\!\!{\rm Energy\ conservation}
\end{align}
\noindent with $C_{TCT}^{\alpha}=\Sigma_{\beta=e,i}C_{TCT}^{\alpha\beta}$. Eq.~ (\ref{c15}) is the linearized collision operator used in existing literature \cite{hinton}, \cite{karney}. However, it should be noted that in previous literature the quadratic contributions in the distribution functions are ignored without any physical justification. Here, on the contrary, the linearization process of the collisional operator rests upon the validity of a fundamental principle, that is the Thermodynamic Covariance Principle (TCP). Notice that, to linearize the collisional operator does not mean that we are in the Onsager regime. As known, this regime is attained by performing two operations: 1) the transport phenomena is evaluated by determining a finite number of Hermitian moments of the distribution functions, and 2) the truncated set of moment equations is linearized in some appropriate way \cite{balescu2}, \cite{balescu1}.

\noindent In order to test the validity of the TFT, we have computed a concrete example of heat loss in L-mode, collisional, Tokamak-plasmas. To perform correctly this calculation, we have solved Eq.~(\ref{pa9}), subject to boundary conditions, by taking into account the so-called Shafranov's shift. The Shafranov shift is the outward radial displacement $\Delta (r)$ of the centre of the magnetic flux surfaces with the minor radius $r$ of the Tokamak, induced by the {\it plasma pressure} $\beta_{mag}$. This shift compresses the surfaces on the outboard side \cite{shafranov}. In terms of number density, Temperature and the intensity of the magnetic field, $\beta_{mag}=P/P_{mag}$, with $P=nT$ and $P_{mag}$ given in the forthcoming Eq.~(\ref{shafranov1}). Fig.~(\ref{Shafranov_shift}) depicts the Shafranov shift  due to $\beta_{mag}$, versus the minor radius of the Tokamak.
\begin{figure*}[htb] 
\hspace{-1.0cm}\includegraphics[width=15cm,height=6cm]{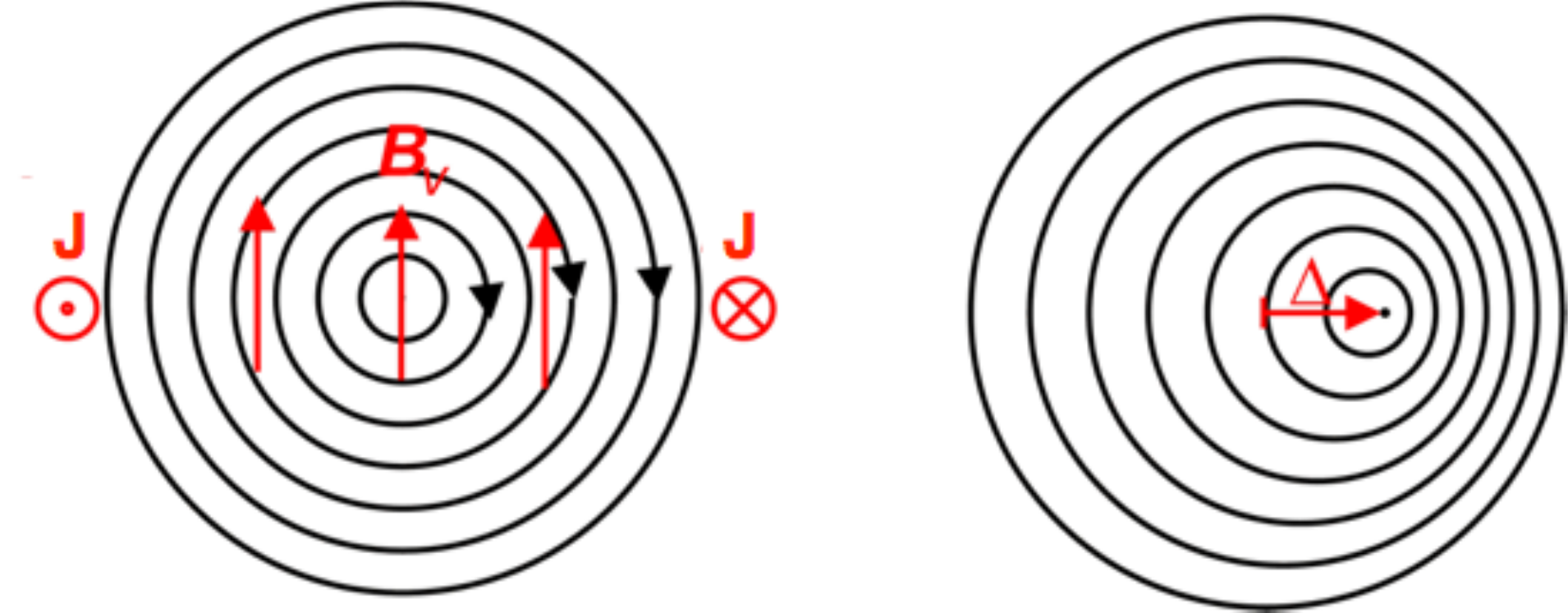}
\caption{ \label{Shafranov_shift} Shafranov shift. In Tokamak-plasmas, the plasma pressure leads to an outward shift, $\Delta$, of the centre of the magnetic flux surfaces. $J$ indicates the direction of the electric current that flows inside the plasma. Note that the poloidal magnetic field increases and the magnetic pressure can, then, balance the outward force.}
\end{figure*}
\noindent Eq.~(\ref{pa9}) has been solved with the boundary conditions obtained by imposing that, for very large values of the thermodynamic forces there are no privileged directions for $h_{\mu\nu}$  in the thermodynamic space \cite{sonnino3}. In order to take into account the Shafranov displacement, in the limit of high aspect ratio, the {\it magnetic configuration} ${\bf B}$ is written in the form (normalized to $2\pi$)
\begin{equation}\label{sha1}
{\bf B}=F\nabla\phi+\nabla\phi\times\nabla\Psi\qquad {\rm with}\quad F\simeq B_0R_0\quad{\rm and}\quad \Psi(r)\simeq B_0\int_0^rr'/q(r')\ dr'
\end{equation}
\noindent Here $\phi$, $r$, and $\Psi$ are the toroidal angle, the minor radius coordinate and the poloidal magnetic flux, respectively. $B_0$ and $R_0$ denote the intensity of the magnetic field at the magnetic axis of the Tokamak and the major radius of the Tokamak, respectively. In coordinates $(R,\phi ,Z)$, the Shafranov shift $\Delta(r)$ is estimated to be roughly equal to 
\begin{equation}\label{shafranov1}
\left\{ \begin{array}{ll}
R=R_0+\Delta (r)+r\cos\theta& \ \ \mbox{}\\
Z =r\sin\theta& \ \ \mbox{}
\end{array}
\right.
\quad {\rm with}\qquad
\left\{ \begin{array}{ll}
\Delta(r)\simeq\beta_{mag}r^2/R_0\\
\beta_{mag}=2\mu_0 nT/B_v^2
\end{array}
\right.
\end{equation}
\noindent with $\theta$, $\mu_0$ and $B_v$ denoting the poloidal angle, the magnetic permeability constant and the poloidal magnetic field, respectively. 

\noindent Fig.~(\ref{FTU}) shows a comparison between experimental data for fully collisional FTU (Frascati Tokamak Upgrade)-plasmas and the theoretical predictions. In the vertical axis we have the (surface magnetic-averaged) radial electron heat flux, and in the horizontal axis the minor radius of the Tokamak. The lowest dashed profile corresponds to the Onsager (Neoclassical) theory and the bold line to the nonlinear theory (Thermodynamical Field Theory (TFT)) satisfying the TCP, respectively. The highest profile are the experimental data provided by the ENEA C.R. - EUROfusion - in Frascati (Massimo Marinucci). As we can see, the TCP principle is well satisfied in the core of the plasma where plasma is in the collisional transport regime. Towards the edge of the Tokamak, transport is dominated by turbulence.
\begin{figure*}[htb] 
\hspace{0.0cm}\includegraphics[width=10.5cm,height=8cm]{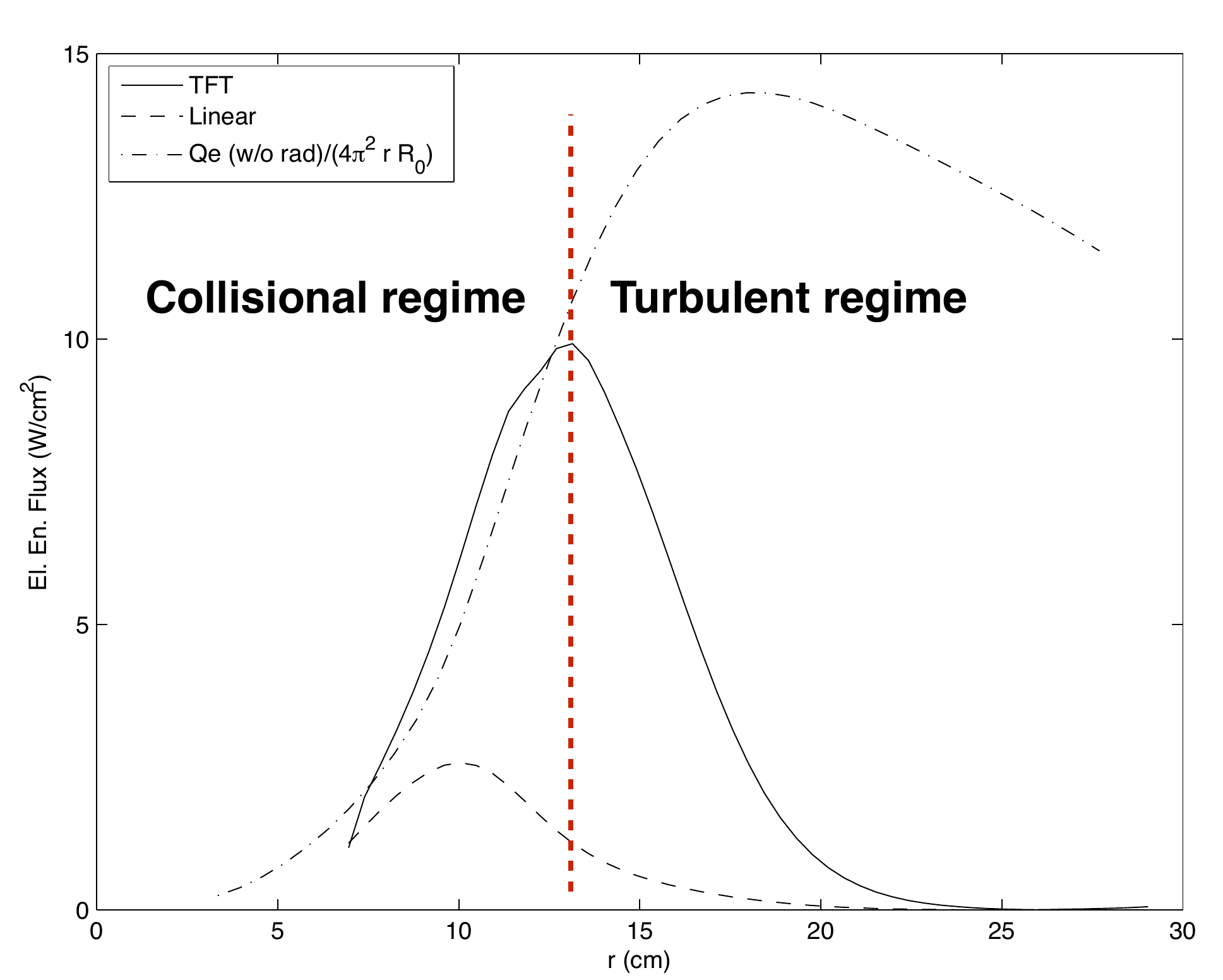}
\caption{ \label{FTU} Electron heat loss in fully collisional FTU-plasmas vs the minor radius of the Tokamak. The highest dashed line is the experimental profile. These data have been provided by Massimo Marinucci from the ENEA C.R. - EUROfusion - in Frascati (Rome - Italy). The bold line is the theoretical profile obtained by the nonlinear theory satisfying the TCP (TFT) and the lowest dashed profile corresponds to the theoretical prediction obtained by Onsager's theory (i.e., by the neoclassical theory).}
\end{figure*}
\noindent We conclude this section by mentioning that close to Onsager's region i.e., $g_{\mu\nu}\simeq L_{\mu\nu}+h_{\mu\nu}$ (with $L_{\mu\nu}$ and $h_{\mu\nu}$ denoting the Onsager matrix and its perturbation, respectively), {\it at the leading order in} $h_{\mu\nu}$ we may transform the closure equation for magnetically confined plasmas (i.e. the last equation in Eqs~(\ref{c8})) in a differential equation, which is covariant under TCT. It is possible to show that the equation to be satisfied by perturbation $h_{\mu\nu}$ is the covariant (under TCT) Laplacian operator constructed with the metric $L_{\mu\nu}$.

\section{Conclusions}\label{conclusions}

We have studied the Lie group associated to the Themodynamic Covariance Principle (TCP). This principle affirms that the nonlinear closure equations must be covariant under the transformations of thermodynamic forces leaving invariant the entropy production and the Glansdorff-Prigogine dissipative quantity. This class of transformations, which form a group, is referred to as the {\it Thermodynamic Coordinate Transformations} (TCT). We have shown that this group, the TCT-group, is a bundle whose base is $diff(\mathbb{RP}^{n-1})$ and the fiber is the space of maps $\mathbb{RP}^{n-1}\rightarrow\mathbb{R}^\times$. The TCT-group may also be split as semidirect product of an abelian normal subgroup and another subgroup of TCT. The irreducible representations of the TCT-group are therefore related to the irreducible representations of these two subgroups.

\noindent According to the TCP, the equivalent character of two representations is warranted iff the fundamental thermodynamic equations are covariant under TCT. In particular, the Lagrangian of a thermodynamic system should be invariant under TCT. As an example of calculation, we have derived the Noether current associated to this TCT-invariance for magnetically confined plasmas in fully collisional transport regime. 

\noindent For Tokamak-plasmas, we have also derived the collisional operator, which guarantees that the TCP is satisfied by the closure transport relations in collisional Tokamak-plasmas. We have shown that, in contrast to the Onsager theory, the theoretical predictions based on the validity of TCP are in fairly good agreement with experiments in the expected region of validity.

\noindent The mathematical study corresponding to the Lie symmetry group associated to this symmetry is under progress. Currently, we are also studying the symmetry-breaking mechanism and the Hamiltonian formulation of problems related to thermodynamic systems out of equilibrium.

\noindent It is worth mentioning that the TCP is actually largely used in a wide variety of thermodynamic processes ranging from non equilibrium chemical reactions to transport processes in tokamak plasmas. As far as we know, the validity of the thermodynamic covariance principle has been verified empirically without exception in physics until now. 

\noindent The influence of nonlinear transformations on fluctuations and a properly modified entropy in
our approach will be subject of future works.

\section{Acknolwedgments}\label{Ack}
G. Sonnino is grateful to Prof. Pasquale Nardone, Prof. Glenn Barnich, and Dr Philippe Peeters from the Universit{\' e} Libre de Bruxelles (ULB) for their suggestions. 
Jarah Evslin is supported by NSFC MianShang Grant 11375201. 
\vskip 0.5truecm

\noindent {\bf Appendix: Splitting of the TCT-group}\label{splitting TCT-group}

\noindent In this Appendix we shall prove the validity of Eq.~(\ref{algebra5}). Denote by $S^{n-1}$ the $n-1$ dimensional unit sphere represented as a $C^{\infty}$ differentiable manifold, which in our case is a submanifold embedded in $\mathbb{R}^{n}$ of the form 
\begin{equation}
\left\Vert \mathbf{x}\right\Vert =1 \label{1}
\end{equation}
\noindent Here the function $\mathbb{R}^{n}\ni\mathbf{x\mapsto}\left\Vert \mathbf{x}
\right\Vert \in\mathbb{R}^{+}$ is some $C^{\infty}(\mathbb{R}^{n})$ function
having also the properties of a norm. For instance
\[
\left\Vert \mathbf{x}\right\Vert =\left[  \sum\limits_{j=1}^{n}(x_{j}
)^{2k}w_{j}\right]  ^{\frac{1}{2k}}~;~\ w_{j}>0,~k=1,2,...\text{ }
\]
Let $\Gamma_{n}^{S}=Diff(S^{n-1})$ be the group of diffeomorphisms of $S^{n-1}$ and let $\Gamma_{n}^{p}\subset Diff(S^{n-1})$ be the subgroup of $\Gamma_{n}^{S}$ that preserves the equivalence relation
$\mathcal{R}$ induced on $S^{n-1}$ by $\mathbf{x,y\in}S^{n-1}$ are equivalent iff $\mathbf{y=\pm x}$.

\begin{remark}
\noindent The quotient space $S^{n-1}/\mathcal{R}$ is a diffeomorphism with the
$n-1$ dimensional projective space ${\mathbb P}^{n-1}$, so $\Gamma_{n}^{p}$ is isomorphic to $Diff({\mathbb {RP}}^{n-1})$. For all map $\mathbf{x}\rightarrow\mathbf{Y(x)}$ where $\mathbf{Y\in}\Gamma_{n}^{S}$ we have
$\mathbf{Y\in}\Gamma_{n}^{P}$ iff
\begin{equation}
\mathbf{Y(-x)=-Y(x)}\label{1.1}
\end{equation}
\end{remark}
\noindent We denote by ${N}^{n}$ the abelian group generated by all
$C^{\infty}(S^{n-1})$ positive functions where the group operation is
defined by multiplication, with the additional symmetry property
\[
f(\mathbf{x})\in {N}^{n}\text{ }if\ f(-\mathbf{x})=f(\mathbf{x})
\]
We also denote by $G^{n} \subset Diff(\mathbb{R}
^{n}\backslash\{0\})$ the TCT-group: the subgroup of the group of
diffeomorphisms of $\mathbb{R}^{n}\backslash\{0\}$ having the
additional homogeneity property
\begin{align}
\mathbb{R}^{n}\backslash\{0\}  &  \ni\mathbf{x\mapsto Y}_{g}(\mathbf{x}
)\in\mathbb{R}^{n}\backslash\{0\}\label{2}\\
\mathbf{Y}_{g}(\lambda\mathbf{x}) &  =\lambda\mathbf{Y}_{g}(\mathbf{x}
);~\lambda\in\mathbb{R},~g\in G^{n}\, \label{2.1}
\end{align}
\noindent We denote by $N^{n}$ the subset (normal subgroup, see below) of $G^{n}\,$ having the form
\begin{equation}
\mathbf{Y}_{g_{{}}}(\mathbf{x})=\mathbf{x~}r_{g}(\mathbf{x})~;~g\in
N^{n}\subset G^{n} \label{2.2}%
\end{equation}
\noindent where $r_{g}(\mathbf{x})$ is a positive $C^{\infty}(\mathbb{R}^{n}
\backslash\{0\})$ homogeneous function
\begin{equation}
r_{g}(\lambda\mathbf{x})=r_{g}(\mathbf{x})>0;~\lambda\in\mathbb{R} \label{2.3}
\end{equation}
\noindent We have the following {\it proposition}

\begin{proposition}
\label{markerPropHnisgroup} $N^{n}$ is a normal abelian subgroup, and for all
$g,~g_{1},~g_{2}~\in$ $N^{n}$ we have
\begin{align}
r_{g_{1}g_{2}}(\mathbf{x})  &  =r_{g_{1}}(\mathbf{x})r_{g_{2}}(\mathbf{x}
)\label{2.4}\\
r_{g^{-1}}(\mathbf{x})  &  =\frac{1}{r_{g}(\mathbf{x})} \label{2.5}
\end{align}
\end{proposition}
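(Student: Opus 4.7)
The plan is to first establish the multiplication law \eqref{2.4} by direct composition, then deduce commutativity, the inverse formula \eqref{2.5}, and finally normality, using throughout only the homogeneity property \eqref{2.3} of the factors $r_g$ and the homogeneity property \eqref{2.1} of the elements of $G^n$.

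First I would compute $\mathbf{Y}_{g_1}\circ \mathbf{Y}_{g_2}$ directly. By definition,
\[
(\mathbf{Y}_{g_1}\circ\mathbf{Y}_{g_2})(\mathbf{x}) \;=\; \mathbf{Y}_{g_1}\!\bigl(\mathbf{x}\,r_{g_2}(\mathbf{x})\bigr) \;=\; \mathbf{x}\,r_{g_2}(\mathbf{x})\,r_{g_1}\!\bigl(\mathbf{x}\,r_{g_2}(\mathbf{x})\bigr).
\]
Since $r_{g_2}(\mathbf{x})>0$ and $r_{g_1}$ is degree-zero homogeneous by \eqref{2.3}, the last factor simplifies to $r_{g_1}(\mathbf{x})$, which gives \eqref{2.4}. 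Because the product of real-valued functions commutes, \eqref{2.4} immediately implies that $N^n$ is abelian and that the product $r_{g_1}r_{g_2}$ is again smooth, positive, and degree-zero homogeneous; hence the composition lies in $N^n$, closing the set under the group operation. The identity corresponds to $r_e\equiv 1$.

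Next I would use \eqref{2.4} to pin down the inverse. Writing $r_{gg^{-1}}(\mathbf{x})=r_g(\mathbf{x})\,r_{g^{-1}}(\mathbf{x})=1$ yields \eqref{2.5}. To confirm that the map $\mathbf{x}\mapsto\mathbf{x}/r_g(\mathbf{x})$ actually is the inverse of $\mathbf{Y}_g$ in $\mathrm{Diff}(\mathbb{R}^n\setminus\{0\})$, I would verify
\[
\mathbf{Y}_g\!\bigl(\mathbf{x}/r_g(\mathbf{x})\bigr) \;=\; \frac{\mathbf{x}}{r_g(\mathbf{x})}\,r_g\!\bigl(\mathbf{x}/r_g(\mathbf{x})\bigr) \;=\; \frac{\mathbf{x}}{r_g(\mathbf{x})}\,r_g(\mathbf{x}) \;=\; \mathbf{x},
\]
again invoking degree-zero homogeneity. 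Smoothness and positivity of $1/r_g$ are inherited from $r_g$, so $g^{-1}\in N^n$.

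Finally, for normality, I would take an arbitrary $h\in G^n$ and $g\in N^n$ and compute the conjugate. Setting $\mathbf{y}:=\mathbf{Y}_h^{-1}(\mathbf{x})$ and using the degree-one homogeneity \eqref{2.1} of $\mathbf{Y}_h$,
\[
(\mathbf{Y}_h\circ\mathbf{Y}_g\circ\mathbf{Y}_h^{-1})(\mathbf{x}) \;=\; \mathbf{Y}_h\!\bigl(\mathbf{y}\,r_g(\mathbf{y})\bigr) \;=\; r_g(\mathbf{y})\,\mathbf{Y}_h(\mathbf{y}) \;=\; \mathbf{x}\,r_g\!\bigl(\mathbf{Y}_h^{-1}(\mathbf{x})\bigr).
\]
The resulting scalar factor $\tilde r(\mathbf{x}):=r_g(\mathbf{Y}_h^{-1}(\mathbf{x}))$ is positive and smooth, and one checks
\[
\tilde r(\lambda\mathbf{x}) \;=\; r_g\!\bigl(\mathbf{Y}_h^{-1}(\lambda\mathbf{x})\bigr) \;=\; r_g\!\bigl(\lambda\mathbf{Y}_h^{-1}(\mathbf{x})\bigr) \;=\; \tilde r(\mathbf{x}),
\]
using that $\mathbf{Y}_h^{-1}$ is itself degree-one homogeneous (inherited from \eqref{2.1}) and that $r_g$ is degree-zero homogeneous. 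Hence $hgh^{-1}\in N^n$, which proves normality. The only place where any care is required is this last step, ensuring that homogeneity is transported through conjugation by a general element of $G^n$, and this is exactly where property \eqref{2.1} is essential; the rest is bookkeeping with \eqref{2.3}.
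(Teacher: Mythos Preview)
Your proof is correct and follows essentially the same route as the paper: direct composition to obtain \eqref{2.4}, abelianness and \eqref{2.5} as immediate consequences, and normality via the conjugation computation using the degree-one homogeneity \eqref{2.1} to pull the scalar $r_g$ through $\mathbf{Y}_h$. The only cosmetic difference is that the paper swaps the roles of the letters $g$ and $h$ in the normality argument (so the conjugating element is called $g$ rather than $h$), but the computation and the key use of \eqref{2.1} are identical.
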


\begin{proof}

\noindent The group properties Eqs.(\ref{2.4}, \ref{2.5}) results immediately, by
direct calculation from the general definition of the group product in
$G^{n}$
\[
\mathbf{Y}_{g_{1}g_{2}}(\mathbf{x}):=\left[  \mathbf{Y}_{g_{1}}\circ
\mathbf{Y}_{g_{2}}\right]  (\mathbf{x});~g_{1},~g_{2}~\in G^{n}
\]
and by from the definition Eq.(\ref{2.2}). The abelian character results from
Eq.(\ref{2.4}). In order to prove that $N^{n}$ is a normal subgroup, let
$h\in N^{n}$ and let be $g\in G^{n}$ an arbitrary element of the TCT-group. We have to prove that
\begin{equation}
u:=ghg^{-1}\in N^{n} \label{2.51}
\end{equation}
\noindent or equivalently, to prove that
\begin{equation}
Y_{u}(\mathbf{x})=Y_{ghg^{-1}}(\mathbf{x})=\left[  Y_{g}\circ Y_{h}\circ
Y_{g^{-1}}\right]  (\mathbf{x})=\mathbf{x~}r(\mathbf{x}) \label{2.52}
\end{equation}
\noindent where
\begin{equation}
Y_{h}(\mathbf{x})=\mathbf{x~}r_{h}(\mathbf{x}) \label{2.53}
\end{equation}
with $r_h(\mathbf{x})$ denoting a positive $C^{\infty}(\mathbb{R}^{n}
\backslash\{0\})$ homogeneous function. Let us also denote
\begin{equation}
Y_{z}(\mathbf{x})=\left[  Y_{g}\circ Y_{h}\right]  (\mathbf{x}) \label{2.531}
\end{equation}
\noindent From Eqs.(\ref{2.1}, \ref{2.53}), we get
\begin{equation}
Y_{z}(\mathbf{x})=r_{h}(\mathbf{x})Y_{g}(\mathbf{x}) \label{2.54}
\end{equation}
\noindent and from Eqs.(\ref{2.52}, \ref{2.54}, \ref{2.1}) we find
\begin{align*}
Y_{u}(\mathbf{x})  &  =\left[  Y_{z}\circ Y_{g^{-1}}\right]  (\mathbf{x}
)=\left[  r_{h}Y_{g}\circ Y_{g^{-1}}\right]  (\mathbf{x})=\\
\left[  r_{h}\circ Y_{g^{-1}}\right]  (\mathbf{x})\left[Y_{g}\circ
Y_{g^{-1}}\right]  (\mathbf{x})  &  =r\left[  Y_{g^{-1}}(\mathbf{x})\right]
~\mathbf{x}
\end{align*}
\noindent Observe that $r\left[  Y_{g^{-1}}(\mathbf{x})\right]  $ possesses all the
properties required by Eq.(\ref{2.3})\ which proves Eq.(\ref{2.52}).
\end{proof}
\noindent Let us now denote by $H^{n}$ the subgroup of $G^{n}$ having the properties
\begin{align}
\left\Vert \mathbf{Y}_{h}(\mathbf{x})\right\Vert  &  =\left\Vert
\mathbf{x}\right\Vert \label{2.6}\\
\mathbf{Y}_{h}(-\mathbf{x})  &  =-\mathbf{Y}_{h}(\mathbf{x})\label{2.7}\\
h  &  \in H^{n}
\end{align}
\begin{remark}
Setting in Eq.~(\ref{2.6}) $\left\Vert \mathbf{x}\right\Vert =1$ and by using
Eq.~(\ref{2.7}), we note that the diffeomorphism group $H{n}$ is isomorphic
to the $Diff({\mathbb {RP}}^{n-1})$\thinspace where ${\mathbb P}^{n-1}$ is the $n-1$ dimensional
projective space, since ${\mathbb P}^{n-1}$ can be represented as $S^{n-1}$ with
identified antipodal points.
\end{remark}
\noindent We have the following {\it proposition}
\begin{proposition}
\label{markerPropGnRepresentation} For all $g\in G^{n}$ we have the unique
representation
\begin{align}
g  &  =hg_{N};~g_{N}\in N^{n}~,h\in H^{n}\label{2.8}\\
\mathbf{Y}_{g}(\mathbf{x})  &  =\left[  \mathbf{Y}_{h}\circ\mathbf{Y}_{g_{N}
}\right]  (\mathbf{x}) \label{2.9}
\end{align}
\end{proposition}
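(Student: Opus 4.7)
The plan is to construct the decomposition $g = h g_N$ explicitly by extracting a canonical ``radial'' factor $g_N \in N^n$ from $g \in G^n$, leaving a norm-preserving piece $h \in H^n$, and then to show that this decomposition is forced.

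First I would define the candidate radial factor by
\begin{equation*}
r_g(\mathbf{x}) \;:=\; \frac{\left\Vert \mathbf{Y}_g(\mathbf{x})\right\Vert}{\left\Vert \mathbf{x}\right\Vert}, \qquad \mathbf{x}\in\mathbb{R}^n\backslash\{0\}.
\end{equation*}
Because $\mathbf{Y}_g$ is a $C^\infty$ diffeomorphism of $\mathbb{R}^n\backslash\{0\}$ onto itself, $r_g$ is strictly positive and $C^\infty$. The homogeneity $\mathbf{Y}_g(\lambda\mathbf{x})=\lambda\mathbf{Y}_g(\mathbf{x})$ together with the norm identity $\|\lambda\mathbf{x}\|=|\lambda|\|\mathbf{x}\|$ immediately gives $r_g(\lambda\mathbf{x}) = r_g(\mathbf{x})$ for every nonzero real $\lambda$, so Eq.~(\ref{2.3}) is satisfied. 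Setting $\mathbf{Y}_{g_N}(\mathbf{x}) := \mathbf{x}\,r_g(\mathbf{x})$ therefore defines an element $g_N\in N^n$.

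Next I would define $h := g g_N^{-1}$ and verify directly that $h\in H^n$. Using Proposition~\ref{markerPropHnisgroup} (specifically Eq.~(\ref{2.5})) one has $\mathbf{Y}_{g_N^{-1}}(\mathbf{x}) = \mathbf{x}/r_g(\mathbf{x})$, so that
\begin{equation*}
\mathbf{Y}_h(\mathbf{x}) = \mathbf{Y}_g\!\bigl(\mathbf{Y}_{g_N^{-1}}(\mathbf{x})\bigr) = \mathbf{Y}_g\!\left(\frac{\mathbf{x}}{r_g(\mathbf{x})}\right) = \frac{\mathbf{Y}_g(\mathbf{x})}{r_g(\mathbf{x})},
\end{equation*}
where the last equality uses the real-scalar homogeneity of $\mathbf{Y}_g$. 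Taking norms and using the definition of $r_g$ yields $\|\mathbf{Y}_h(\mathbf{x})\|=\|\mathbf{x}\|$, which is Eq.~(\ref{2.6}). The antipodal property Eq.~(\ref{2.7}) follows by applying the same formula with $\mathbf{x}\mapsto-\mathbf{x}$, since $r_g(-\mathbf{x})=r_g(\mathbf{x})$ (it is homogeneous of degree zero under all real $\lambda$) while $\mathbf{Y}_g(-\mathbf{x})=-\mathbf{Y}_g(\mathbf{x})$. By construction $g = h\, g_N$, giving the existence half of Eqs.~(\ref{2.8})-(\ref{2.9}).

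For uniqueness, suppose $g = h_1 g_{N,1} = h_2 g_{N,2}$ with $h_i\in H^n$ and $g_{N,i}\in N^n$. Writing $\mathbf{Y}_{g_{N,i}}(\mathbf{x})=\mathbf{x}\, r_i(\mathbf{x})$ and using Eq.~(\ref{2.1}) to pull the scalar $r_i(\mathbf{x})$ through $\mathbf{Y}_{h_i}$, equality of the two compositions at every $\mathbf{x}$ becomes
\begin{equation*}
r_1(\mathbf{x})\,\mathbf{Y}_{h_1}(\mathbf{x}) \;=\; r_2(\mathbf{x})\,\mathbf{Y}_{h_2}(\mathbf{x}).
\end{equation*}
Taking norms and invoking Eq.~(\ref{2.6}) gives $r_1(\mathbf{x})\|\mathbf{x}\| = r_2(\mathbf{x})\|\mathbf{x}\|$, hence $r_1 = r_2$ (so $g_{N,1}=g_{N,2}$), and cancelling the common positive factor yields $\mathbf{Y}_{h_1}=\mathbf{Y}_{h_2}$. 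Together with the previous Proposition that $N^n$ is normal in $G^n$, this unique decomposition establishes $G^n = N^n\rtimes H^n$, which is Eq.~(\ref{algebra5}).

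The only delicate point I anticipate is bookkeeping the two compatible notions of ``homogeneity'': the real (sign-changing) homogeneity of $\mathbf{Y}_g$ demanded by Eq.~(\ref{2.1}), versus the even homogeneity of $r_g$ required by Eq.~(\ref{2.3}). The verification that $r_g$ built from a norm and from $\mathbf{Y}_g$ automatically satisfies $r_g(-\mathbf{x})=r_g(\mathbf{x})$ is the key compatibility that lets the normal-subgroup factor $g_N$ sit inside $N^n$; once this is checked, the rest of the proof is a direct computation.
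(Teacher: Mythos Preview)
Your proposal is correct and follows essentially the same route as the paper: define the radial factor $r_g(\mathbf{x})=\|\mathbf{Y}_g(\mathbf{x})\|/\|\mathbf{x}\|$, set $\mathbf{Y}_{g_N}(\mathbf{x})=\mathbf{x}\,r_g(\mathbf{x})$, identify $\mathbf{Y}_h(\mathbf{x})=\mathbf{Y}_g(\mathbf{x})/r_g(\mathbf{x})$, and obtain uniqueness by taking norms and using $\|\mathbf{Y}_h(\mathbf{x})\|=\|\mathbf{x}\|$. Your explicit verification of the antipodal property~(\ref{2.7}) and of the even homogeneity of $r_g$ is slightly more careful than the paper's write-up, but the underlying argument is the same; note only that the semidirect-product conclusion you append at the end is, in the paper, stated separately as Theorem~\ref{markerTheoremTCPsplitting} rather than as part of this proposition.
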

\begin{proof}

\noindent Existence of the representation: note that by setting
\begin{align}
\mathbf{Y}_{h}(\mathbf{x})  &  =\frac{\mathbf{Y}_{g}(\mathbf{x})\left\Vert
\mathbf{x}\right\Vert }{\left\Vert \mathbf{Y}_{g}(\mathbf{x})\right\Vert
}\label{2.10}\\
\mathbf{Y}_{g_{N}}(\mathbf{x})  &  =\mathbf{x~}r_{N}(\mathbf{x})~\label{2.11}
\\
\mathbf{~}r_{N}(\mathbf{x})  &  =\frac{\left\Vert \mathbf{Y}_{g}
(\mathbf{x})\right\Vert }{\left\Vert \mathbf{x}\right\Vert } \label{2.12}
\end{align}
\noindent Eq.(\ref{2.9}) is verified and $r_{N}(\mathbf{x})$ has the property
Eq.(\ref{2.3}). In order to prove uniqueness, we consider that in
Eq.(\ref{2.9}) $\mathbf{Y}_{h}\in N^{n}$, with property Eqs.(\ref{2.6},
\ref{2.7}), but otherwise arbitrary, and $\mathbf{Y}_{g_{N}}(\mathbf{x}
)=\mathbf{x~}r(\mathbf{x})$ with $r(\mathbf{x})~$ an arbitrary smooth, homogenous function of zero degree. We rewrite Eq.~(\ref{2.9}), by using Eq.~(\ref{2.1})
\begin{equation}
\mathbf{Y}_{g}(\mathbf{x})=\mathbf{Y}_{h}\left[  \mathbf{x~}r(\mathbf{x}
)~\right]  =r(\mathbf{x})\mathbf{Y}_{h}(\mathbf{x})~ \label{2.13}
\end{equation}
Since $r(\mathbf{x})>0$ we have
\begin{equation}
\left\Vert \mathbf{Y}_{g}(\mathbf{x})\right\Vert =\left\vert r(\mathbf{x}
)\right\vert \left\Vert \mathbf{Y}_{g_{N}}(\mathbf{x})\right\Vert
=r(\mathbf{x})\left\Vert \mathbf{x}\right\Vert \label{2.14}
\end{equation}
\noindent which leads to
\begin{equation}
r(\mathbf{x})=r_{h}(\mathbf{x})=\frac{\left\Vert \mathbf{Y}_{g}(\mathbf{x}
)\right\Vert }{\left\Vert \mathbf{x}\right\Vert } \label{2.15}
\end{equation}
\noindent From Eqs.(\ref{2.15}, \ref{2.14}) we obtain Eq.(\ref{2.10}) so the proof of
uniqueness of the representation Eq.(\ref{2.8}).
\end{proof}

\noindent Irrespective to the choice of the norm in the definition of the subgroup
$H^{n}$, we may easily convince ourselves that they are all equivalent up to a group isomorphism.

\noindent For easy reference, we recall the semidirect product definition and properties \cite{wiki},
\cite{Robinson}, \cite{wolfram}, \cite{eom}
\begin{theorem}
\label{markerTheoremWiki}Let $N,~H$ subgroups of the group $G$,~where $\ N$ is
a normal subgroup. Then the following statements are equivalent: $a)$ \ $G=NH$
and $N\cap H=\{e\}$ . $b)$~\ For all $g\in G$ there exists an unique
representation $g=nh$ with $n\in N$ and $h\in H$. $c)$~\ For all \ $g\in G$
there exists an unique representation $g=hn$ with $n\in N$ and $h\in H$. d).
The natural embedding $i:H\rightarrow G$, composed with the natural projection
$p:G\rightarrow G/N$, yields an isomorphism $\psi:H\rightarrow G/N$ ,
\ $\psi=p\circ i$ with inverse \ $\widehat{\chi}:G/N\rightarrow H$. $\ e)$
There exists a homomorphism $\chi:G\rightarrow H$ that is the identity on $H$
and whose kernel is $N$.
\end{theorem}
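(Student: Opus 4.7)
My plan is to prove the chain of equivalences by taking (a) as a hub and establishing the four short equivalences (a) $\Leftrightarrow$ (b), (a) $\Leftrightarrow$ (c), (a) $\Leftrightarrow$ (d), and (b) $\Leftrightarrow$ (e). The only substantive ingredient throughout is the normality of $N$; every implication then reduces to bookkeeping of which factor lies in which subgroup, so I will sketch each argument rather than grind through the verifications.

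For (a) $\Leftrightarrow$ (b), existence in either direction is just the condition $G = NH$, and uniqueness of $g = nh$ is equivalent to $N \cap H = \{e\}$. On one hand, two decompositions $n_1 h_1 = n_2 h_2$ give $n_2^{-1} n_1 = h_2 h_1^{-1} \in N \cap H$; on the other, any $x \in N \cap H$ admits the two decompositions $x \cdot e$ and $e \cdot x$, which by uniqueness forces $x = e$. The equivalence (a) $\Leftrightarrow$ (c) is exactly parallel once one observes that normality of $N$ gives $NH = HN$, so the conditions $G = NH$ and $G = HN$ coincide.

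For (a) $\Leftrightarrow$ (d), I would analyze the homomorphism $\psi = p \circ i : H \to G/N$, $h \mapsto hN$. Its kernel is $H \cap N$, so injectivity is the condition $H \cap N = \{e\}$; surjectivity amounts to every coset $gN$ meeting $H$, which is the condition $G = HN$. Both conditions together reproduce (a), and the inverse $\widehat{\chi}$ is then the map that sends $gN$ to its unique $H$-part.

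The most delicate step, which I expect to be the main obstacle, is (b) $\Leftrightarrow$ (e), because producing a genuine \emph{homomorphism} in the forward direction is where normality is actually used. Assuming (b), I would define $\chi(g) = h$ from the unique decomposition $g = nh$ and check the homomorphism property by rewriting $g_1 g_2 = n_1 (h_1 n_2 h_1^{-1}) h_1 h_2$, where $h_1 n_2 h_1^{-1} \in N$ precisely by normality, so uniqueness of decomposition yields $\chi(g_1 g_2) = h_1 h_2$; the properties $\chi|_H = \mathrm{id}_H$ and $\ker \chi = N$ are then immediate. Conversely, given (e), I would set $h = \chi(g)$ and $n = g h^{-1}$; then $\chi(n) = \chi(g)\chi(g)^{-1} = e$ (using $\chi(h) = h$ since $\chi(g) \in H$), so $n \in N$, giving existence of $g = nh$, and uniqueness follows by applying $\chi$ to any factorisation $g = n' h'$ to force $h' = \chi(g)$.
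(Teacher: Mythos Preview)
Your proof is correct. However, you should know that the paper does not actually prove this theorem: it is stated ``for easy reference'' as a standard fact about semidirect products, with citations to Robinson's textbook, the Encyclopedia of Mathematics, and similar sources, and is then immediately applied (together with the two preceding propositions) to deduce the splitting $G^{n}=N^{n}\rtimes H^{n}$. So there is no proof in the paper to compare against; your argument supplies exactly the routine verification that the paper delegates to the literature, and it does so along the expected lines (uniqueness $\Leftrightarrow$ trivial intersection, normality to rewrite $h_1 n_2$ as an element of $N$ times $h_1$, kernel/image analysis of $h\mapsto hN$).
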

\noindent If one of the above properties are verified, $G$ is said to split in a semidirect product of the subgroups $H$ and normal subgroup $N$. In this case the representations of the group $G$ are related to the representations of the subgroups $H$ and $N$. 

\noindent By using the Theorem \ref{markerTheoremWiki} and Propositions \ref{markerPropGnRepresentation}, \ref{markerPropHnisgroup}, we finally get

\begin{theorem}
\label{markerTheoremTCPsplitting} The TCT-group $G^{n}$ is a semidirect
product of the abelian normal subgroup \ $N^{n}$ and the subgroup $H^{n}$
\begin{equation}
G^{n}=N^{n}~\rtimes H^{n} \label{2.16}
\end{equation}
\end{theorem}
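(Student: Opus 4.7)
The plan is to reduce the claim to a direct application of Theorem \ref{markerTheoremWiki}, using the two propositions already established in the Appendix. Specifically, I would verify condition (c) of Theorem \ref{markerTheoremWiki}, namely that every $g \in G^n$ admits a unique representation of the form $g = h g_N$ with $h \in H^n$ and $g_N \in N^n$, together with the fact that $N^n$ is a normal subgroup of $G^n$.

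First, I would invoke Proposition \ref{markerPropHnisgroup}, which already establishes that $N^n$ is an abelian normal subgroup of $G^n$, so the "normal" hypothesis needed to speak of a semidirect product is in place. Next, I would invoke Proposition \ref{markerPropGnRepresentation}, which supplies both the existence and uniqueness of the factorization $\mathbf{Y}_g(\mathbf{x}) = [\mathbf{Y}_h \circ \mathbf{Y}_{g_N}](\mathbf{x})$ with the explicit formulas
\begin{equation*}
\mathbf{Y}_h(\mathbf{x}) = \frac{\mathbf{Y}_g(\mathbf{x})\|\mathbf{x}\|}{\|\mathbf{Y}_g(\mathbf{x})\|}, \qquad r_N(\mathbf{x}) = \frac{\|\mathbf{Y}_g(\mathbf{x})\|}{\|\mathbf{x}\|}.
\end{equation*}
These two facts together instantiate condition (c) of Theorem \ref{markerTheoremWiki}, which is equivalent to $G^n$ splitting as $N^n \rtimes H^n$.

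The only small piece of bookkeeping beyond the propositions is the triviality of the intersection $N^n \cap H^n = \{e\}$, which follows by observing that an element of $N^n$ has the form $\mathbf{Y}(\mathbf{x}) = \mathbf{x}\, r(\mathbf{x})$ with $r(\mathbf{x}) > 0$ homogeneous of degree zero, while membership in $H^n$ forces $\|\mathbf{Y}(\mathbf{x})\| = \|\mathbf{x}\|$; combining these gives $r(\mathbf{x}) \equiv 1$, i.e.\ the identity. With this, all hypotheses of Theorem \ref{markerTheoremWiki} are met and Eq.~(\ref{algebra5}) follows.

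The main obstacle has in fact already been dispatched in the earlier propositions: establishing the explicit decomposition $g = h g_N$ and its uniqueness, and verifying that conjugation by an arbitrary element of $G^n$ preserves the homogeneity-of-degree-zero property that characterizes $N^n$. Once those structural facts are in hand, the semidirect-product statement is essentially a formal consequence of the Theorem \ref{markerTheoremWiki} criteria; the proof of Theorem \ref{markerTheoremTCPsplitting} itself is just the assembly step.
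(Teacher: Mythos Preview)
Your proposal is correct and follows essentially the same route as the paper: the paper's proof of Theorem \ref{markerTheoremTCPsplitting} is a one-line assembly that simply invokes Theorem \ref{markerTheoremWiki} together with Propositions \ref{markerPropHnisgroup} and \ref{markerPropGnRepresentation}, exactly as you outline. Your additional verification that $N^n \cap H^n = \{e\}$ is harmless but redundant, since condition (c) of Theorem \ref{markerTheoremWiki} already suffices on its own.
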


\noindent {\bf Appendix: Calculation of the Noether Current}\label{Noether}

\noindent We sketch in some detail the derivation of Eq.~(\ref{n4}). From Action ({\ref{pa5}), we have
\begin{equation}\label{2A1}
I=\int\Bigl[ R-(\Gamma^\lambda_{\alpha\beta}-
{\tilde\Gamma}^\lambda_{\alpha\beta})S^{\alpha\beta}_{\lambda}
\Bigr]\sqrt{g}\ \! {d^{}}^n\!X=\int L\ \! {d^{}}^n\!X
\end{equation}
\noindent with
\begin{equation}\label{2A2}
L\equiv \Bigl[R-(\Gamma^\lambda_{\alpha\beta}-
{\tilde\Gamma}^\lambda_{\alpha\beta})S^{\alpha\beta}_{\lambda}
\Bigr]\sqrt{g}
\end{equation}
\noindent For easy reference, we report the expression of the Noether current
\begin{equation}\label{2A3}
j^\mu_\alpha=\frac{\partial L}{\partial\Phi^A_{,\mu}}{\mathcal L}_{{\mathbf\xi}_\alpha}\Phi^A-L\xi^\mu_\alpha\quad \quad {\rm with}\quad \Phi^A=(g_{\mu\nu},\Gamma^\kappa_{\mu\nu})\quad ;\quad (\alpha=1,\cdots ,N)
\end{equation}
\noindent The first term appearing in Noether's current (\ref{2A3}), i.e. $\partial L/\partial\Phi_\mu^A$, is computed directly from Eqs~(\ref{pa7}) and Eq.~(\ref{2A2}). We get
\noindent 
\begin{eqnarray}\label{2A4}
\!\!\!\!\!\!\!\!\frac{\partial L}{\partial g_{\kappa\nu,\lambda}}&=&\Bigr[\frac{1}{2}g^{\alpha\kappa}S^{\nu\kappa}_\alpha+\frac{1}{2}g^{\alpha\nu}S^{\kappa\lambda}_\alpha-\frac{1}{2}g^{\alpha\lambda}S^{\kappa\nu}_\alpha+\frac{1}{2\sigma}X^\alpha X^\lambda S_\alpha^{\nu\kappa}-\frac{X^\kappa X^\lambda}{2(n+1)\sigma}S^{\alpha\nu}\nonumber\\
\!\!\!\!\!\!\!\!&-&\frac{X^\nu X^\lambda}{2(n+1)\sigma}S^{\alpha\kappa}+\frac{1}{2\sigma}X^\alpha X^\lambda \Bigl(\Gamma_{\alpha\beta}^\kappa g^{\beta\nu}+\Gamma_{\alpha\beta}^\nu g^{\alpha\kappa}\Bigr)
\nonumber\\
\!\!\!\!\!\!\!\!&-&\frac{X^\lambda (X^\nu+ X^\kappa)}{2(n+1)\sigma}\Bigl(\Gamma^\alpha_{\alpha\beta}g^{\beta\kappa}+\Gamma_{\alpha\beta}^\kappa g^{\alpha\beta}\Bigr)-\frac{1}{2\sigma}\Bigl(X^\beta X^\lambda g^{\kappa\nu}-\frac{X^\nu X^\lambda}{n+1}g^{\beta\kappa}\nonumber\\
\!\!\!\!\!\!\!\!&-&\frac{X^\kappa X^\lambda}{n+1}g^{\beta\nu}\Bigr)\Gamma_{\alpha\beta}^\alpha\Bigr]\sqrt{g}\nonumber\\
\!\!\!\!\!\!\!\!\frac{\partial L}{\partial\Gamma^\eta_{\kappa\eta,\lambda}}& =&\Bigl(\frac{1}{2}g^{\kappa\lambda}\delta^\nu_\eta+\frac{1}{2}g^{\nu\lambda}\delta^\kappa_\eta-g^{\kappa\nu}\delta^\lambda_\eta\Bigr)\sqrt{g}
\end{eqnarray}
\noindent The Lie derivatives of the fields $\Phi^A=(g_{\mu\nu}, \Gamma^\mu_{\nu\kappa})$ read
\begin{eqnarray}\label{2A5}
\!\!\!\!\!\!\!\!{\mathcal L}_{\delta X_{(\alpha)}\epsilon^\alpha}g_{\mu\nu}&=&\bigl[\partial_\mu\bigl(\delta X^\lambda_{(\alpha)}\epsilon^\alpha\bigr)\Bigr]g_{\lambda\nu}+\bigl[\partial_\nu\bigl(\delta X^\lambda_{(\alpha)}\epsilon^\alpha\Bigr]g_{\lambda_\mu}+\delta X^\lambda_{(\alpha)}\epsilon^\alpha g_{\mu\nu,\lambda}
\\
\!\!\!\!\!\!\!\!{\mathcal L}_{\delta X_{(\alpha)}\epsilon^\alpha}\Gamma^\mu_{\nu\kappa}&=&\Bigl[\partial_\kappa (\delta X^\eta_{(\alpha)}\epsilon^\alpha)\Bigr]\Gamma^\mu_{\nu\eta}+\Bigl[\partial_\nu (\delta X^\eta_{(\alpha)}\epsilon^\alpha)\Bigr]\Gamma^\mu_{\eta\kappa}-\Bigl[\partial_\beta (\delta X^\mu_{(\alpha)}\epsilon^\alpha)\Bigr]\Gamma^\beta_{\nu\kappa}+\delta X^\lambda_{(\alpha)}\epsilon^\alpha\Gamma^\mu_{\nu\kappa,\lambda}\nonumber\\
\!\!\!\!\!\!\!\! &+&\partial^2_{\nu\kappa}\bigl(\delta X^\mu_{(\alpha)}\epsilon^\alpha\bigr)\nonumber
\end{eqnarray}
\noindent with displacement $\delta{\mathbf X}_{(\alpha)}$ coinciding with $\xi_\alpha$ (i.e., $\delta{\mathbf X}_{(\alpha)}\equiv{\mathbf\xi}_\alpha$). Note that, in literature, the Lie derivative of the affine connection, is referred to as the {\it pseudo-Lie derivative} due to the presence of the last term in the second equation of Eqs~(\ref{2A5}) (i.e., the second derivative of the infinitesimal vector $\delta X^\mu_{(\alpha)}\epsilon^\alpha$). Now, taking into account Eqs~(\ref{algebra8})-(\ref{algebra11}), in the limit of $\sigma\gg 1$, we get Eq.~(\ref{n4}).

\bigskip
\end{document}